\newcounter{NN}
\newtheorem{proposition}[NN]{Proposition}
\newtheorem{theorem}[NN]{Theorem}
\newtheorem{corollary}[NN]{Corollary}
\newtheorem{example}[NN]{Example}
\def\arXiv{{arXiv:}}
\def\url{}
\begin{document}
\bibliographystyle{plain}
\title{Generalised Manin transformations and QRT maps}
\author{Peter H.~van der Kamp, David I. McLaren and G.R.W. Quispel}
\date{Department of Mathematics and Statistics\\
La Trobe University, Victoria 3086, Australia\\[5mm]
\noindent
Keywords: QRT map, Manin transformation, pencil of curves, involution,\\
measure preservation,  Pascal’s theorem, symmetries.
}

\maketitle

\begin{abstract}
Manin transformations are maps of the plane that preserve a pencil of cubic curves. They are the composition of two involutions. Each involution is constructed in terms of an involution point that is required to be one of the base points of the pencil.
We generalise this construction to explicit birational maps of the plane that preserve quadratic resp. certain quartic pencils, and show that they are measure-preserving and hence integrable. In the quartic construction the two involution points are required to be base points of the pencil of multiplicity 2. On the other hand, for the quadratic pencils the involution points can be any two distinct points in the plane (except for base points). We employ Pascal's theorem to show that the maps that preserve a quadratic pencil admit infinitely many symmetries. The full 18-parameter QRT map is obtained as a special instance of the quartic case in a limit where the two involution points go to infinity. We show by construction that each generalised Manin transformation can be brought to QRT form by a fractional affine transformation. We also specify classes of generalised Manin transformations which admit a root.
\end{abstract}

\section{Introduction}
The (18-parameter) Quispel-Roberts-Thompson (QRT) map \cite{QRT1,QRT2} has become an archetypical integrable map of the plane.  It is measure preserving, preserves a pencil of biquadratic curves, and can be written as the composition of 2 involutions. Starting with works of Tsuda \cite{Tsu}, Jogia et al \cite{Jet} and Duistermaat's monograph \cite{Dui}, a thorough understanding of these maps was provided from an algebraic geometric viewpoint. For example, as shown by Tsuda, the QRT map can be described as an addition formula on a rational elliptic surface.

A Manin transformation \cite{Man,Dui} is also an integrable map of the plane, measure preserving and a composition of two involutions. However, it leaves invariant a pencil of cubic curves. In \cite{CMMOQ2,PSS,PS,KCMMOQ} it was shown that Manin transformations arise in Kahan discretizations of certain vector fields. Other integrable maps which preserve pencils of different degree type can also be found in the literature, e.g. pencils of biquartic curves \cite{JGTR,KJ,Het,Ket,VGR}, and pencils of bisextic curves \cite{CMMOQ,PPS}.

It is known \cite{Jet,Ves} that every birational map of infinite order which preserves a pencil of algebraic curves is birationally conjugate to a translation, either on a ruled rational surface or on an elliptic surface. Hence, the pencils of the above mentioned maps all have genus $<2$. As the genus $g$ of a curve with $n_m$ singular points of multiplicity $m$ is related to the degree $N$,
\begin{equation} \label{genus}
g=\frac{(N-1)(N-2)}2-\sum_{m} n_m \frac{m(m-1)}2,
\end{equation}
and the genus is invariant under birational transformations, it follows that for $N>3$ the invariant pencils of these maps have singular points.

One can ask for a given map which preserves a pencil of algebraic curves whether it is birationally conjugate to a QRT map, or, quoting \cite{VGR}
\begin{quote} \label{quote}
whether all integrable second order mappings with a rational invariant can be brought to a QRT form by a birational change of coordinates of the 2-plane.
\end{quote}
The authors of \cite{VGR} provide two examples of maps which both preserve a quartic pencil, but for only one of these they were able to construct a transformation to a QRT mapping. As an example, we show that that map is the root of a generalised Manin transformation, cf. section 7.

In this paper we provide a geometric construction of classes of mappings which preserve a pencil of curves of total degree $N=2,3,4$ (thus including all Manin transformations), and show these can all be brought into QRT form by a projective collineation. In the case $N=2$, our construction gives rise to the existence of uncountably many symmetries, through application of Pascal's theorem.

We also specify which subclasses of mappings are equivalent to a root of a QRT map, also known as a symmetric QRT mapping. These includes mappings which arise as the Kahan discretisation of physical systems such as the Suslov motion of a rigid body under the constraint that a certain component of the angular velocity vector vanishes \cite{susl}, and symmetric monopoles as described by reduced Nahm equations \cite{Nahm}, cf. \cite{KCMMOQ}.

\section{Generalised Manin transformations}
In a biquadratic pencil, a horizontal (or vertical) line intersects a generic curve in two points only. Hence one can define a horizontal (vertical) switch $\iota_1$ ($\iota_2$) as the involution which switches those two points. This geometric construction defines the QRT map,  $\tau=\iota_2\circ\iota_1$, cf. \cite[page viii]{Dui}.

A Manin transformation is also a composition of two involutions. They are defined for cubic curves \cite{Man}, see also \cite[Section 4.2]{Dui}. Given a base point $p$ of a cubic pencil, i.e. a point which lies on every curve in the pencil, the line through $p$ intersects each curve in only two other points. Hence one can define a $p$-switch $\iota_p$ as the involution which switches those two points. We call $p$ the involution point of  $\iota_p$. If $q$ is another base point, a Manin transformation is obtained by composition, $\tau_{p,q}=\iota_q\circ\iota_p$.

This geometric construction can be generalised to pencils of degree $N\geq2$, $P_{\alpha,\beta}(u,v)=0$, where
\begin{equation} \label{pen}
P_{\alpha,\beta}(u,v):=\alpha F_a(u,v)+\beta F_b(u,v),
\end{equation}
and $F_e(u,v)$ is a polynomial in two variables $u,v$ of fixed total degree $N$ which depends on parameters $e_1,e_2,\ldots$, and $F_a\neq F_b$. If the degrees of $F_a$ and $F_b$ are not equal then we take the degree of the pencil, $N$, to be the largest of the two degrees. For all $(u,v)$ there are $\alpha,\beta$ such that $P_{\alpha,\beta}(u,v)=0$, i.e. $\frac{\alpha}{\beta}=-\frac{F_b}{F_a}(u,v)$. For base points $(u,v)$ we have $P_{\alpha,\beta}(u,v)=0$ for all $\alpha,\beta$, and there are $N^2$ of them (considering $(u,v)$ to be projective coordinates in $\mathbb{P}^2$ and counting intersection multiplicities\footnote{The reader should be aware that by doing so QRT-maps have generically 10 base points including 2 singular point at $(0,\infty)$ and $(\infty,0)$ yielding an intersection total of $8(1\cdot 1)+2(2\cdot 2)=4\cdot4$, instead of the 8 base points in $\mathbb{P}^1\times\mathbb{P}^1$, cf. \cite[Lemma 3.1.1]{Dui}.}), namely the solutions of $F_a=F_b=0$.

For $N=2$ we are free to choose the involution points $p,q$ and there are no constraints on the pencil. For $N=3$ (the Manin case) there are no constraints on the pencil, the involution points are base points of the pencil. For $N=4$ we require the pencil to have two base points, $p$ and $q$, which are singular points (of multiplicity\footnote{A curve $C(u,v)=0$ has a singular point of multiplicity $m$ if $m\geq 1$ is the smallest number such that all $k$-th order partial derivatives with $k<m$ vanish at $(c,d)$ \cite{SF}. A singular point of multiplicity $m$ is also called a double point ($m=2$), a triple point ($m=3$), or an $m$-ple point.} 2), and which we choose to be the involution points. For $N>4$ the base points $p$ and $q$ are required to be singular points of multiplicity $N-2$. This ensures that any line through $p$ or $q$ intersects each curve of the degree $N$ pencil in only two points, and hence that the involutions $\iota_p$ and $\iota_q$ are well-defined. The construction here is reminiscent of the construction in \cite{ST}, page 99, where the group of rational points becomes quite different if one uses a line through a singular point of higher multiplicity.

Other geometric constructions of birational involutions have been found. In \cite{PSWZ}, involutions are defined using a pencil of curves of degree $M$, such that the intersection with a given pencil of curves of degree $N$ at the common base points is $MN-2$. In \cite{vdK}, the current construction is generalised by allowing involutions of the type $\iota_p$, where $p$ is not fixed but lies on a so called involution curve. We note that birational involutions of the plane have been classified by Bertini \cite{Bay,Ber}: every non-trivial birational involution of $\mathbb{P}^2$ is birationally conjugate to exactly one of the following: a de Jonquieres involution, a Geiser involution, or a Bertini involution. In the work of Moody \cite{Moo}, the Bertini involution has been described as a Manin transformation with an involution curve, although not in these terms. Using the results of \cite{CD,Dol}, maps preserving an elliptic fibration were classified in \cite{CT}: they i-$m$) preserve each fiber of a Halphen surface of index $m$, or ii-$m$) they do not preserve each fiber. We mention that Manin involutions are de Jonquieres, cf. \cite{PSWZ}. Furthermore, all transformations we construct are fiber preserving, of type i-$m$. The precise birational equivalence to mentioned mappings is beyond the scope of this paper.

We will now provide an explicit formula for the generalised Manin involution $\iota_p$ that preserves a pencil (\ref{pen}) of degree $N$, in terms of the polynomials $F_a$ and $F_b$ and their first and second order partial derivatives. The formula $(x,y)=(u+(c-u)z,v+(d-v)z)$ gives a parametrization of the line going through $(u,v)$, for $z=0$, and through $p=(c,d)$, for $z=1$. Below, in equation (\ref{sec}), we provide the value of $z$ such that $(x,y)$ and $(u,v)$ are on the same curve of the given pencil, i.e. such that $F_a(x,y)F_b(u,v)=F_a(u,v)F_b(x,y)$.
Denote $F_a(z):=F_a(u+(c-u)z,v+(d-v)z)$, and  $F_a^{(z)}:=\frac{\text{d}}{\text{d} z}F_a$. A Taylor expansion, about $z=0$, gives
\begin{equation} \label{TF}
F_a(z)=F_a(0)+F_a^{(z)}(0)z+\frac12F_a^{(z,z)}(0)z^2+\cdots+\frac1{N!}F_a^{(z,\overset{N}{\ldots},z)}(0)z^N,
\end{equation}
where
\[
F_a^{({z,\overset{n}{\ldots},z})}(0)=\sum_{i=0}^{n}
\binom{n}{i} F_a^{({u,\overset{i}{\ldots},u},{v,\overset{n-i}{\ldots},v})}(u,v)(c-u)^i(d-v)^{n-i}.
\]
For $N>2$ we have $
F_a(1)=F_a^{(z)}(1)=\cdots=F_a^{({z,\overset{N-3}{\ldots},z})}(1)=0$,
and similarly for $F_b$. These equations are used in appendix A to prove the explicit formula for the generalised Manin involution, given in the following theorem.
\begin{theorem} \label{T1}
Let $P_{\alpha,\beta}(u,v)=0$ be a pencil of degree $N\geq2$ and let $p$ be a point which for $N>2$ is a base point and has multiplicity $N-2$. Then the generalised Manin involution with involution point $p=(c,d)$ is given by
\begin{equation} \label{Man}
\iota_p:(u,v)\rightarrow (u,v)+z (c-u,d-v),
\end{equation}
where $z$ is given by
\begin{equation} \label{sec}
z=2\left(2(2-N)-\frac{F_a(0)F_b^{(z,z)}(0)-F_a^{(z,z)}(0)F_b(0)}{F_a(0)
F_b^{(z)}(0)-F_a^{(z)}(0)F_b(0)}\right)^{-1}.
\end{equation}
\end{theorem}
In Appendix B we derive a condition which enables one to verify that $\iota_p$ is anti measure preserving with density\footnote{Recall \cite[Section 2.2]{RQ} that a map $\phi$ is (anti) measure preserving with density $\rho$ if its Jacobian $J$ equals $(-)\rho/(\rho\circ\phi)$.}
$
L^{N-3}/F_a$,
where $L=0$ is a line through $p$, and we comment that the condition is satisfied for $N=2,3,4$.

The above construction provides an explicit formula for Manin involutions on pencils of any degree $N>1$, which (for $N>3$) admit a base point that is a singular point of multiplicity $N-2$.
From two distinct generalised Manin involutions (\ref{Man}), one can compose a generalised Manin transformation:
\begin{equation} \label{tpq}
\tau_{p,q}=\iota_q\circ\iota_p,
\end{equation}
which preserves a pencil of degree $N$. However, there are no generalised Manin transformations which preserve an irreducible pencil of degree $N>4$. According to (\ref{genus}) the genus of a curve of degree $N>3$ with two singular points of degree $N-2$ is $\frac{(N-2)(5-N)}{2}$, which is less than zero for $N>5$ and hence such curves are reducible. In Appendix C, we show by geometric means that curves with two singular points $p,q$ of degree $N-2$ are reducible for $N>4$ and that lines through $p$ and $q$ factor out. As a corollary, it follows that the generalised Manin transformation (\ref{tpq}), which preserves pencils of total degree $4$ is the most general.

Note that because a biquadratic polynomial is a special instance of a quartic polynomial with double points at $(\infty,0)$ and $(0,\infty)$, the full 18-parameter QRT map is obtained as a special case of the degree $N=4$ generalised Manin transformation.

\section{Transforming a generalised Manin transformation into QRT form} \label{QRTform}
For $N=2,3,4$ every generalised Manin transformation (\ref{tpq}) can be brought into QRT form (which can be regarded as a normal form for generalised Manin transformations) by a projective collineation which transforms the line through the involution points to infinity. For a given map, if it preserves a pencil of degree 3 or 4, it is easy to find the transformation: for $N=3$ the involution points are base points of the pencil, and for $N=4$ they are singular base points. In any case, the involution points are included in the set of base points of the map and its inverse.

Consider the fractional affine transformation
\begin{equation} \label{flsw}
\psi:(u,v)\rightarrow (U,V)= \left(\frac{au+bv+c}{gu+hv+i},\frac{du+ev+f}{gu+hv+i}\right).
\end{equation}
Such a transformation maps lines to lines, which can be seen as follows. The coordinates $(u,v)$ can be taken as affine coordinates of a projective space and then $\psi$ (\ref{flsw}) is induced
by a linear transformation of the vector space it is derived from. Indeed, we can write $\psi=\kappa \phi \kappa^{-1}$ where $\phi$ is a linear map and $\kappa:(u,v,w)\rightarrow(u/w,v/w)$.
Since $\kappa(p+t(q-p))=\kappa(p)+s(\kappa(q)-\kappa(p))$, with $sp_3-tq_3=ts(p_3-q_3)$ the maps $\kappa$, $\kappa^{-1}$, and hence $\psi$ (\ref{flsw}), map lines to lines. Such a map is called a homography, or, a {\em projective collineation}. The fundamental theorem of projective geometry states that every map which sends lines to lines (in a projective space of dimension at least two)
is a projective collineation \cite[Thm 2.26]{Art}.

If $p=(c,d)$ and $q=(e,f)$ are points in the plane and \begin{equation}\label{L}
L(u,v)=(d-f)(u-e)-(c-e)(v-f),
\end{equation}
so that $L=0$ is the line through $p$ and $q$, then any projective collineation of the form,
\begin{equation} \label{toqrt}
(u,v)\rightarrow \left(\frac{A(u-e)+B(v-f)}{L},\frac{C(u-c)+D(v-d)}{L}\right),
\end{equation}
where neither $(A,B)$ nor $(C,D)$ is perpendicular to $L$ (ensuring invertible), sends $p$ to $(\infty,0)$, and $q$ to $(0,\infty)$. Throughout this paper we will refer to the line $L=0$ through $p$ and $q$ as the Manin line for the generalised Manin transformation (\ref{tpq}).
Thus we have the following result.

\begin{theorem} Let $p=(c,d)$ and $q=(e,f)$ be the involution points for a pencil of curves $P_{\alpha,\beta}(u,v)=0$ of degree $2\leq N\leq 4$, so that if $N>2$ then $p,q$ are base points of multiplicity $N-2$. With $L=0$ being the Manin line, and for all $A,B,C,D$, the projective collineation (\ref{toqrt}) brings the generalised Manin transformation (\ref{tpq}) into QRT form.
\end{theorem}

In the remainder of this paper we consider the cases $N=2,3,4$ separately, and section \ref{roots} is devoted to the study of roots of generalised Manin transformations, which are equivalent to the so called symmetric QRT maps. As we will show there, the example considered in \cite[section 3]{VGR} turns out to be the root of a generalised Manin transformation.

\section{Quadratic pencils} \label{squad}
In this section  we consider the degree $N=2$ case. Taking two different involution points $p$ and $q$, the 16-parameter map $\tau=\iota_{q}\circ\iota_{p}$ is measure-preserving with density $1/\left(L(u,v)F_a(u,v)\right)$, where $L=0$ is the Manin line. Using Pascal's hexagrammum mysticum theorem, we show that for any $r$ on the Manin line the map $\iota_{r}$ is a reversing symmetry of $\tau$. This implies that the map $\tau$ has uncountably many symmetries.

Let
\begin{equation} \label{quadfa}
F_a(u,v):=a_1 + a_2 u + a_3 v + a_4 u^2 + a_5 u v + a_6 v^2
\end{equation}
be a polynomial of degree $N=2$ in variables $u,v$, that is $a_4,a_5$ and $a_6$ are not all zero. We have a pencil $P_{\alpha,\beta}(u,v)=0$ of conics (i.e. curves of genus zero). Any point $p=(c,d)$ can be taken as involution point. An involution is defined by
\begin{equation} \label{evol}
\iota_p(u,v)=(u,v)+z(c-u,d-v),
\end{equation}
with $z$ given by (\ref{sec}) (or alternatively by (\ref{fir})), where, explicitly,
\begin{equation}\label{expl}
\begin{split}
F_a(0)&=F_a(u,v),\\ F_a^{(z)}(0)&=F_a^{(u)}(u,v)(c-u)+F_a^{(v)}(u,v)(d-v),\\ F_a^{(z,z)}(0)&=F_a^{(u,u)}(u,v)(c-u)^2+2F_a^{(u,v)}(u,v)(c-u)(d-v)
+F_a^{(v,v)}(u,v)(d-v)^2,
\end{split}
\end{equation}
and
$F_a^{(u)}(u,v)=a_2+2a_4u+a_5v$, $F_a^{(v)}=a_3+a_5u+2a_6v$,
$F_a^{(u,u)}(u,v)=2a_4$,
$F_a^{(u,v)}(u,v)=a_5$,
$F_a^{(v,v)}(u,v)=2a_6$.

\begin{example} \label{exampquadratic}
Ten curves from the pencil $P_{\alpha,\beta}(u,v)=0$ with
\begin{equation}\label{Exampl2}
F_a(u,v)=u^2-uv+v^2+u-v-2 \quad \mbox{and} \quad  F_b(u,v)=uv,
\end{equation}
are plotted in Figure \ref{QP}.

\begin{figure}[htb]
\begin{center}
\includegraphics[width=10cm]{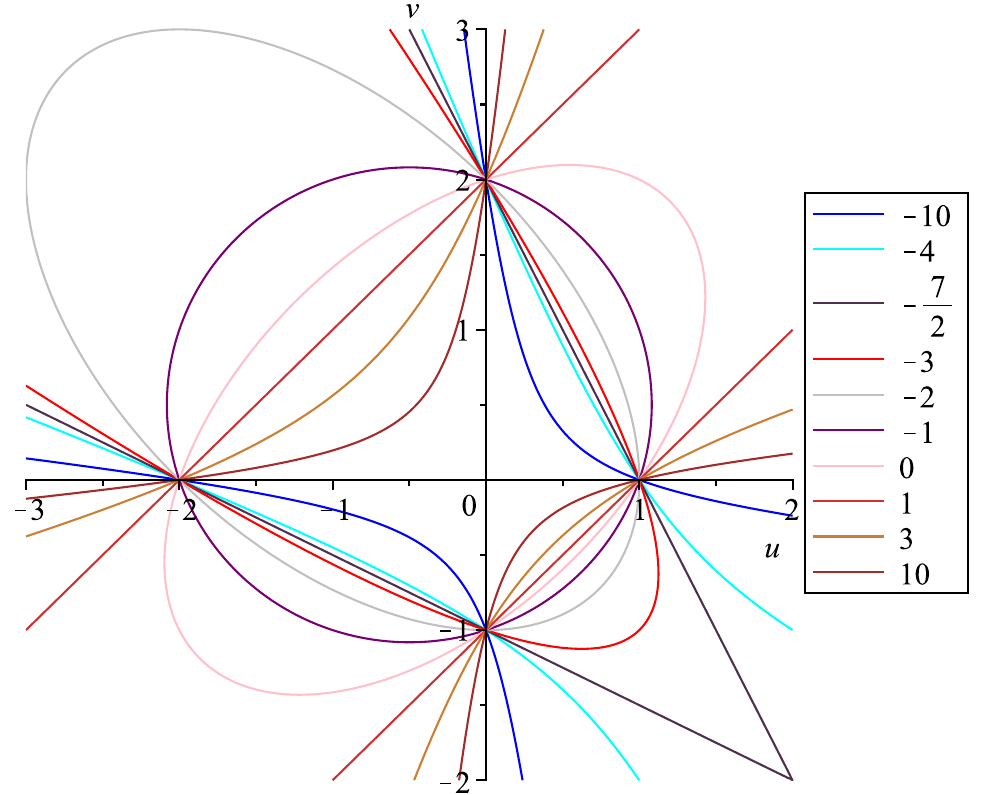}
\caption{\label{QP} Ten curves from the quadratic pencil defined by (\ref{pen}) and (\ref{Exampl2}), labeled by the value of $-\beta/\alpha$. The base points are $(1,0)$, $(0,-1)$, $(-2,0)$, $(2,0)$.}
\end{center}
\end{figure}

Taking $p=(2,-2)$ yields the involution
\begin{equation}\label{cd0}
\iota_{2,-2}(u,v)=-\frac{2}{u-v-2}(v,u),
\end{equation}
and taking $q=(-1,1)$ yields the involution
\begin{equation}\label{cd1}
\iota_{-1,1}(u,v)=\frac{\left(-v(2u+v+1),u(u+2v-1)\right)}{u^2+uv+v^2-1}.
\end{equation}

The Manin line is $u+v=0$. Introducing new coordinates
\[
\left(x,y\right)=\left(\frac{u+1}{u+v},\frac{v+2}{u+v}\right)
\]
the involution $\iota_p$ becomes
\[
\iota_1:(x,y)\rightarrow (y-x+\frac12,y)
\]
and the involution $\iota_q$ becomes
\[
\iota_2:(x,y)\rightarrow \left(x,\frac{x+2-xy}{x-y}\right).
\]
The ratio $F_a/F_b$ becomes
\[
\frac{{y}^{2}+14\,x\,(y-x)+7\,x-8\,y-2}{(2\,x-y+1)(2\,x-y-2)}.
\]
The QRT mapping $\tau=\iota_2 \circ \iota_1$ has matrices
\[
A^0=\begin{pmatrix}
0&0&-14\\
0&14&7\\
1&-8&-2
\end{pmatrix},\qquad A^1=
\begin{pmatrix}
0&0&4\\
0&-4&-2\\
1&1&-2
\end{pmatrix}.
\]
\end{example}

\noindent
In general, the involution (\ref{evol}) has the form
\[
\iota_p(u,v)=\frac{\big(N_1(u,v),N_2(u,v)\big)}{D(u,v)},
\]
where $N_i$ and $D$ are generically of degree $t=3$. If $t=3$ the point $p=(c,d)$ is a double point on $N_1=0$, $N_2=0$ and on $D=0$, and all points on the curve $C$ defined by $F_a(c,d)F_b(u,v)=F_b(c,d)F_a(u,v)$ are mapped to $(c,d)$.
When $p$ is a point on one line through two base points, the degree is lowered to $t=2$, and $p$ is a simple point on $N_1=0,N_2=0$, and on $D=0$.
An example is given by (\ref{cd1}). Here the map $\iota_p$ is an involution on the line that contains $p$, but the other line of the union $C$ is mapped to $p$.
When $p$ is the intersection of two straight lines through two base points, the degree is lowered to $t=1$ and $p$ is not on $N_1=0,N_2=0$ or on $D=0$. The involution is an involution on both lines, (\ref{cd0}) provides an example.
For base points $p$ the degree is $t=0$, i.e. we have $\iota_p=\text{id}$, the identity.

The involution $\iota_{c,d}$ (\ref{evol}) is anti measure-preserving with density
\begin{equation} \label{dens}
\frac{1}{\left(r(u-c)+s(v-d)\right)F_a(u,v)},
\end{equation}
where the first factor represents
any straight line through $(c,d)$. Taking the composition of two  involutions (\ref{evol}), we construct the map $\tau_{p,q}$ (\ref{tpq}).
The following holds.
\begin{proposition}
The map $\tau_{p,q}$ defined by (\ref{tpq}), which preserves each curve of the quadratic pencil $P_{\alpha,\beta}(u,v)=0$ with (\ref{quadfa}), is an integrable map of the plane. It is measure-preserving with density
$
(L(u,v)F_a(u,v))^{-1}$,
where $L(u,v)=0$ is given by (\ref{L}) and $L=0$ is the Manin line through the involution points $p=(c,d)$ and $q=(e,f)$.
\end{proposition}

Let us now define two special involutions,
\begin{equation} \label{sevol}
\iota_1=\lim_{c\rightarrow\infty} \iota_{c,0},\qquad
\iota_2=\lim_{f\rightarrow\infty} \iota_{0,f},
\end{equation}
the horizontal, respectively vertical, switch, cf. \cite[page viii]{Dui}.
Considering the involution (\ref{evol}), it is clear that $z$ is of the form $z=N/D$ where $N$ is linear in $c,d$, and $D$ quadratic. Hence, the involutions have the form $\iota_1(u,v)=(u+cz,v)$, and $\iota_2(u,v)=(u,v+fz)$. In the respective limits we find
\[
cz=-2\frac{F_a(u,v)F_b^{(u)}(u,v)-F_b(u,v)F_a^{(u)}(u,v)}{F_a(u,v)F_b^{(u,u)}(u,v)-F_b(u,v)F_a^{(u,u)}(u,v)},
\]
and
\[
fz=-2\frac{F_a(u,v)F_b^{(v)}(u,v)-F_b(u,v)F_a^{(v)}(u,v)}{F_a(u,v)F_b^{(v,v)}(u,v)-F_b(u,v)F_a^{(v,v)}(u,v)}.
\]
The map $\tau=\iota_2\circ\iota_1$ is a special case of the asymmetric QRT map \cite{QRT1,QRT2}, with matrices
\[
A^0=\begin{pmatrix}
0 & 0 & a_4 \\
0 & a_5 & a_2 \\
a_6 & a_3 & a_1
\end{pmatrix}\quad \text{ and } \quad
A^1=\begin{pmatrix}
0 & 0 & b_4 \\
0 & b_5 & b_2 \\
b_6 & b_3 & b_1
\end{pmatrix},
\]
cf. page 1 of Duistermaat's book \cite{Dui}. The involutions $\iota_1$ and $\iota_2$ (\ref{sevol}) are anti measure-preserving with densities $1/(F_a(u,v)(r_1v+r_2))$, $1/(F_a(u,v)(s_1u+s_2))$
respectively, for arbitrary $r_i,s_i$. This implies in particular that $\tau$ is measure-preserving with density
$1/F_a(u,v)$,
and $\iota_{c,d}\circ \iota_1$ is measure-preserving with density
$1/\left(\left(v-d\right)F_a(u,v)\right)$.

\subsection*{Symmetries}
The following theorem follows from Pascal's theorem \cite{Yze}, which is illustrated by Figure \ref{Pas}.

\begin{figure}[h]
\begin{center}
\includegraphics[width=9cm]{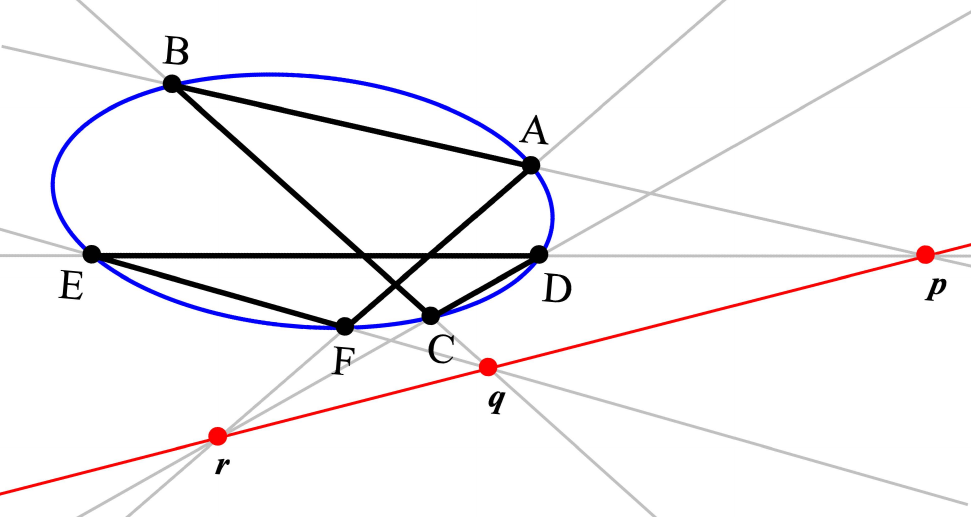}
\caption{\label{Pas} Lines through opposite sides of a hexagon on a conic meet in three points which lie on a straight line, called the Pascal line.}
\end{center}
\end{figure}

\begin{theorem} \label{theums}
A map $\tau_{p,q}$ defined by (\ref{tpq}), which preserves a quadratic pencil $P_{\alpha,\beta}(u,v)=0$, has uncountably many symmetries.
\end{theorem}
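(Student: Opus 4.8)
The plan is to show that for every point $r$ on the line $L=0$ through the involution points $p=(c,d)$ and $q=(e,f)$, the involution $\iota_r$ is a \emph{reversing symmetry} of $\tau_{p,q}$, meaning $\iota_r\circ\tau_{p,q}\circ\iota_r=\tau_{p,q}^{-1}$; since $L$ has uncountably many points, this will suffice. Granting it, note that $\iota_p$ is also a reversing symmetry, since $\iota_p\circ\tau_{p,q}\circ\iota_p=\iota_p\circ\iota_q\circ\iota_p\circ\iota_p=\iota_p\circ\iota_q=\tau_{p,q}^{-1}$, and therefore the composition $\sigma_r:=\iota_r\circ\iota_p$ of two reversing symmetries is a genuine symmetry:
\[
\sigma_r\circ\tau_{p,q}\circ\sigma_r^{-1}
=\iota_r\circ\big(\iota_p\circ\tau_{p,q}\circ\iota_p\big)\circ\iota_r
=\iota_r\circ\tau_{p,q}^{-1}\circ\iota_r
=\big(\iota_r\circ\tau_{p,q}\circ\iota_r\big)^{-1}
=\tau_{p,q}.
\]

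The crux is thus the claim about $\iota_r$, which I would deduce from the identity of birational maps of the plane
\[
\tau_{p,q}\circ\iota_r\circ\tau_{p,q}=\iota_r\qquad(r\in L),
\]
equivalent to $\iota_r\circ\tau_{p,q}\circ\iota_r=\tau_{p,q}^{-1}$ after left-multiplying by $\tau_{p,q}^{-1}$ and right-multiplying by $\iota_r$. Since $\iota_p,\iota_q,\iota_r$ each preserve every conic of the quadratic pencil, it suffices to verify the identity on a generic smooth member $C$. Fix a generic $x_0\in C$ and inscribe in $C$ the hexagon with successive vertices
\[
V_1=x_0,\quad V_2=\iota_p(V_1),\quad V_3=\iota_q(V_2),\quad V_4=\iota_r(V_3),\quad V_5=\iota_p(V_4),\quad V_6=\iota_q(V_5).
\]
By construction the sides $\overline{V_1V_2}$ and $\overline{V_4V_5}$ pass through $p$, the sides $\overline{V_2V_3}$ and $\overline{V_5V_6}$ pass through $q$, and $\overline{V_3V_4}$ passes through $r$. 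Hence the first two pairs of opposite sides of the hexagon meet at $p$ and at $q$, so its Pascal line is exactly $L$; by Pascal's theorem the third pair of opposite sides, $\overline{V_3V_4}$ and $\overline{V_6V_1}$, then meet at a point of $L$. Since $\overline{V_3V_4}$ already contains $r\in L$ and $\overline{V_3V_4}\neq L$ for generic $x_0$, that point is $r$, whence $V_6=\iota_r(V_1)=\iota_r(x_0)$. On the other hand, tracing the construction, $V_6=\iota_q\circ\iota_p\circ\iota_r\circ\iota_q\circ\iota_p(x_0)=\tau_{p,q}\circ\iota_r\circ\tau_{p,q}(x_0)$. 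This proves the identity on $C$, and letting $C$ run through the pencil it holds on a dense subset of the plane, hence identically.

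It then remains to check that the symmetries $\sigma_r$ are uncountably many, i.e. that $r\mapsto\sigma_r$, equivalently $r\mapsto\iota_r$, is injective on $L$: for generic $x\notin L$ the value $\iota_r(x)$ is the second point in which the line $\overline{xr}$ meets the pencil member through $x$, and distinct $r,r'$ give distinct lines $\overline{xr}\neq\overline{xr'}$, hence distinct images, so $\iota_r\neq\iota_{r'}$ when $r\neq r'$. The anticipated obstacle is not Pascal's theorem itself but turning the hexagon picture into a rigorous identity of maps: one must handle the loci where the construction degenerates---coincident vertices $V_i$, a side of the hexagon equal to $L$, the exceptional case $\overline{V_3V_4}=L$ in which $r$ is not pinned down, and the finitely many reducible members of the pencil---and argue that an identity verified on the one-parameter family of generic smooth conics, which sweeps out a dense subset of $\R^2$, propagates to an identity of rational maps on all of the plane. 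Phrasing everything projectively over $\C$, where every line meets every conic in two points and the six vertices are generically distinct, removes the incidence subtleties, and the resulting polynomial identity then descends to the real affine picture.
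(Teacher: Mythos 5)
Your proposal is correct and follows essentially the same route as the paper: the same inscribed hexagon $A,\iota_p(A),\iota_q\iota_p(A),\ldots$, the same application of Pascal's theorem to conclude that $\iota_r$ is a reversing symmetry for every $r$ on the line $pq$, and the same final step of composing two reversing symmetries to produce symmetries. You are somewhat more careful than the paper in checking that distinct $r$ give distinct maps and in flagging the degenerate configurations, but these are refinements of, not departures from, the paper's argument.
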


\begin{proof}
We first show that the map $\tau_{p,q}$ has uncountably many reversing symmetries, cf. \cite{RQ}. Let $r$ be on the line through $p$ and $q$, and let
\[
B=\iota_{p}(A),\quad
C=\iota_{q}(B),\quad
D=\iota_{r}(C),\quad
E=\iota_{p}(D),\quad
F=\iota_{q}(E),
\]
as in Figure \ref{Pas}. By construction $A,B,C,D,E,F$ lie on a conic. The lines $AB$ and $DE$ meet in $p$, the lines $BC$ and $EF$ meet in $q$. According to Pascal's theorem the lines $CD$ and $AF$ meet in a point $s$ on the Pascal line $pq$. But $r$ is on $CD$ and on $pq$, so we have $s=r$ and hence $A=\iota_{r}(F)$.
It follows that $\iota_{r}\circ\iota_{q}\circ\iota_{p}$ is an involution.
Thus, we have $\iota_{r}\tau_{p,q}=\tau_{p,q}^{-1}\iota_{r}$ showing that
$\iota_{r}$ is a reversing symmetry. Uncountably many symmetries are obtained by composition of reversing symmetries (and more reversing symmetries by composition of
symmetries and reversing symmetries).
\end{proof}

\begin{corollary} Theorem \ref{theums} implies that QRT maps which preserve a pencil of quadratic curves admit uncountably many reversing symmetries, namely all generalised Manin involutions with involution point at infinity.
\end{corollary}

\begin{example} For Example \ref{exampquadratic}, other involutions in the $(u,v)$-plane (\ref{evol}), whose involution point is on the line $u+v=0$ give rise to mappings that are reversing symmetries of the map $\tau$. Examples are $\iota_{0,0}$ which in QRT coordinates gives rise to
\[
(x,y)\rightarrow (x,y)-(y-\frac{1}{2})(1,2)
\]
and $\iota_{1,-1}$ which gives rise to
\begin{align*}
(x,y)&\rightarrow (x,y)-{\frac {4\,{x}^{2}y-10\,x{y}^{2}+4\,{y}^{3}-2\,{x}^{2}+6\,xy-{y}^{2}+
13\,x-10\,y-2}{2\,{x}^{2}-2\,xy-4\,{y}^{2}-x+5\,y+8}}
(2,1).
\end{align*}
\end{example}

\section{Cubic pencils} \label{scubi}
In this section, we consider the degree $N=3$ case, where the pencil comprises elliptic curves of genus 1. We parametrise the pencil in terms of the coordinates of two distinct base points $p$ and $q$, which we choose to be involution points. The 20-parameter map we obtain explicitly, $\tau=\iota_{q}\circ\iota_{p}$, is measure-preserving with density $1/F_a(u,v)$.

An irreducible plane curve of degree three with no singular points has genus one. Two such curves generically intersect in nine points. To find these intersection points, in general one needs to find the roots of a ninth order polynomial. However, we use the coordinates of two distinguished, and distinct, intersection points, $p=(c,d)$ and $q=(e,f)$, to parametrise the cubic curves. We require the cubics
\begin{equation}\label{cubs}
F_a(u,v):=a_1 + a_2 u + a_3 v + a_4 u^2 + a_5 u v + a_6 v^2 + a_7 u^3 + a_8 u^2 v + a_9 u v^2 +a_{10} v^3
\end{equation}
to vanish at $p$ and $q$. Assuming that $K:={c}^{3}{f}^{3}-{d}^{3}{e}^{3}$ does not vanish\footnote{One can also consider the case where $K=0$: if $c\neq e$ one can solve for $a_1$ and $a_2$, or when $d\neq f$ one can solve for $a_1$ and $a_3$.}, we can solve the constraints for the parameters $a_7$ and $a_{10}$. We find
$a_7 = G_a/K$, $a_{10} = H_a/K$ with
\begin{align*}
G_a&=
({d}^{3}-{f}^{3})a_1
+({d}^{3}e-c{f}^{3})a_2
+df({d}^{2}-{f}^{2})a_3
+({d}^{3}{e}^{2}-{c}^{2}{f}^{3})a_4 \\
&\ \ \ \
+df({d}^{2}e-c{f}^{2})a_5
+{d}^{2}{f}^{2}(d-f)a_6
+df({d}^{2}{e}^{2}-{c}^{2}{f}^{2})a_8
+{d}^{2}{f}^{2}(de-cf)a_9,
\\
H_a&=
({e}^{3}-{c}^{3})a_1
+ce(e^{2}-{c}^{2})a_2
+(d{e}^{3}-f{c}^{3})a_3
+c^2{e}^{2}(e-c)a_4\\
&\ \ \ \
+ce(d{e}^{2}-f{c}^{2})a_5
+(d^2{e}^{3}-f^2{c}^{3})a_6
+c^2e^{2}(de-fc)a_8
+ce(d^2e^{2}-f^2{c}^{2})a_9.
\end{align*}
We have chosen this parametrisation so we can easily set $d=e=0$ and take a limit where $c$ or $f$ goes to infinity, which yields $a_7=0$, $a_{10}=0$ respectively. If both limits are taken we are left with a biquadratic
\begin{equation}\label{CUFS}
F_a(u,v)={u}^{2}va_{{8}}+u{v}^{2}a_{{9}}+{u}^{2}a_{{4}}+uva_{{5}}+{v}^{2}a_{{6}
}+ua_{{2}}+va_{{3}}+a_{{1}}.
\end{equation}
For finite involution points $p$ and $q$ we obtain the following general form
\begin{equation}\label{CUF}
\begin{split}
F_a(u,v)=&G_au^3+H_av^3\\
&+K({u}^{2}va_{{8}}+u{v}^{2}a_{{9}}+{u}^{2}a_{{4}}+uva_{{5}}+{v}^{2}a_{{6}
}+ua_{{2}}+va_{{3}}+a_{{1}})
\end{split}
\end{equation}
We have two Manin involutions,
\begin{equation} \label{2invs}
\iota_p(u,v):=(u,v)+z(c-u,d-v), \qquad \iota_q(u,v):=(u,v)+z(e-u,f-v),
\end{equation}
with $z$ given by (\ref{sec}) and (\ref{expl}), where for the latter involution $(c,d)$ should be replaced by $(e,f)$, and
\begin{align*}
F_a^{(u)}(u,v)&=3G_au^2+K(2uva_{{8}}+{v}^{2}a_{{9}}+2{u}a_{{4}}+va_{{5}}+a_{{2}}),\\
F_a^{(v)}(u,v)&=3H_av^2+K({u}^{2}a_{{8}}+2u{v}a_{{9}}+ua_{{5}}+2{v}a_{{6}}+a_{{3}}), \\
F_a^{(u,u)}(u,v)&=6G_au+2K(va_{{8}}+a_{{4}}),\\
F_a^{(v,v)}(u,v)&=6H_av+2K(ua_{{9}}+a_{{6}}),\\
F_a^{(u,v)}(u,v)&=K(2{u}a_{{8}}+2{v}a_{{9}}+a_{{5}}).
\end{align*}
The involutions (\ref{2invs}) are anti measure-preserving with density $1/F_a(u,v)$.
\begin{proposition}
The composition of the Manin involutions (\ref{2invs}) is an integrable map of the plane. It preserves each curve of the cubic pencil $P_{\alpha,\beta}(u,v)=0$ with (\ref{CUF}) (or (\ref{CUFS})) and it is measure-preserving with density  $1/F_a(u,v)$.
\end{proposition}

Taking $d=e=0$, with $\iota_1=\lim_{c\rightarrow\infty}\iota_{c,0}$ and $\iota_2=\lim_{f\rightarrow\infty}\iota_{0,f}$, the map $\tau=\iota_2\circ\iota_1$ is a special case of the QRT map with
\[
A^0=\begin{pmatrix}
0 & a_8 & a_4 \\
a_9 & a_5 & a_2 \\
a_6 & a_3 & a_1
\end{pmatrix}\quad \text{ and } \quad
A^1=\begin{pmatrix}
0 & b_8 & b_4 \\
b_9 & b_5 & b_2 \\
b_6 & b_3 & b_1
\end{pmatrix}.
\]

\begin{example} \label{cubicexample}
We choose particular values for the constants in $F_a,F_b$ (\ref{CUF}), $a_1=a_9=1$, $a_2=a_3=a_4=-1$, $a_5=a_6=a_8=0$, $b_1=b_9=0$, $b_2=b_3=b_4=-1$, $b_5=b_6=b_8=1$, $c=2$, $d=e=0$, $f=1$. This gives
\begin{equation}\label{Exampl6}
F_a(u,v)=5u^3+8(uv^2-u^2-u-v+1),\ F_b(u,v)=6u^3+8(u^2v-u^2+uv+v^2-u-v).
\end{equation}

\begin{figure}[htb]
\begin{center}
\includegraphics[width=9cm]{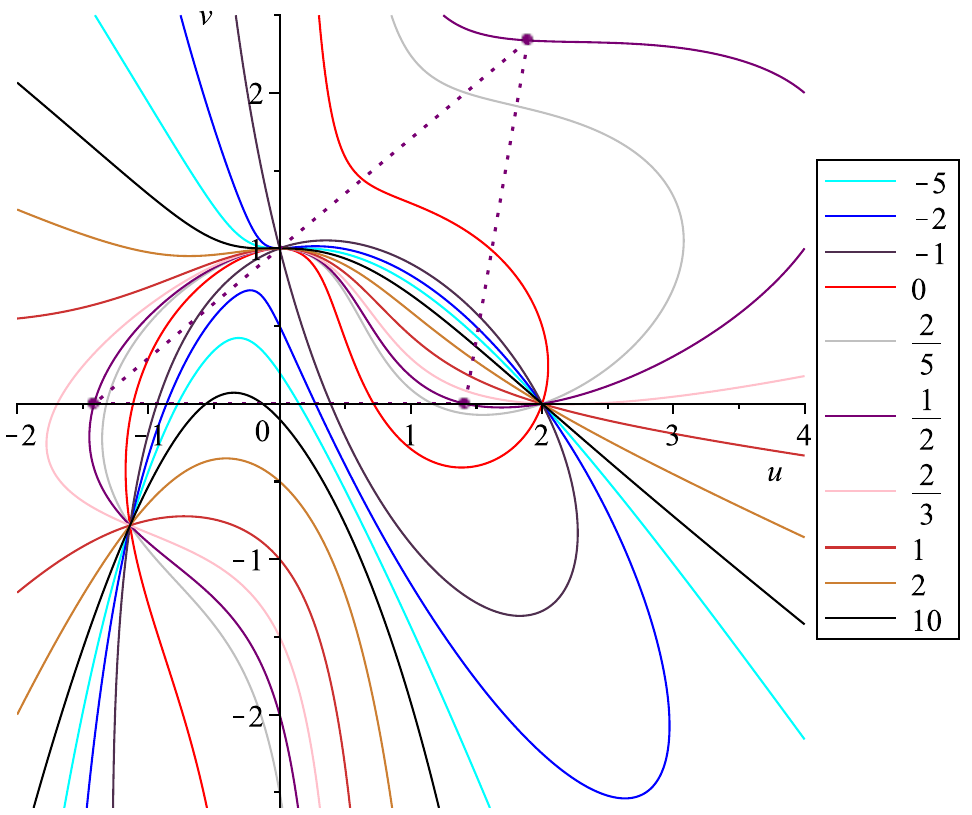}
\caption{\label{CP} Ten curves from the cubic pencil defined by (\ref{pen}) and (\ref{Exampl6}), labeled by the value of $-\beta/\alpha$.}
\end{center}
\end{figure}

In Figure \ref{CP} we have drawn 10 curves of this cubic pencil. In addition to the involution points $(2,0)$ and $(0,1)$ there is one other finite real base point\footnote{Also, there are 4 finite complex base points and all curves on which $(0,1)$ is non-singular are tangent at $(0,1)$.}, near $-(1.140,0.782)$. We have
$
\iota_{2,0}(u,v)=(u,v)-\frac{g}{h}(u-2,v)$, with
\begin{align*}
g&= {u}^{5}+3\,{u}^{4}v+21\,{u}^{3}{v}^{2}+24\,{u}^{2}{v}^{3}+8\,u
{v}^{4}-2\,{u}^{3}v-46\,{u}^{2}{v}^{2}-16\,u{v}^{3}-16\,{v}^{4}\\
&\ \ \ -22\,{u
}^{3}-22\,{u}^{2}v+16\,u{v}^{2}+24\,{v}^{3}+52\,{u}^{2}+16\,uv-16\,{v}
^{2}-24\,u+24\,v-16,\\
h&=
\left( u-2 \right)\left( {u}^{4}+3\,{u}^{3}v+21\,{u}^{2}{v}^{2}+24
\,u{v}^{3}+8\,{v}^{4}-3\,{u}^{3}-7\,{u}^{2}v-44\,u{v}^{2}-8\,{v}^{3}\right.\\
&\ \ \ \left.-6
\,{u}^{2}-8\,uv+4\,{v}^{2}+28\,u+20\,v-24\right),
\end{align*}
and $\iota_{0,1}(u,v)=(u,v)-\frac{k}{l}(u,v-1)$, where
\begin{align*}
k&=
6\,{u}^{5}+{u}^{4}v-11\,{u}^{3}{v}^{2}+8\,{u}^{2}{v}^{3}+8\,u{
v}^{4}-{u}^{4}+44\,{u}^{3}v-8\,{u}^{2}{v}^{2}-16\,u{v}^{3}-33\,{u}^{3}
\\
&\ \ \ -16\,{u}^{2}v+24\,u{v}^{2}+8\,{v}^{3}+16\,{u}^{2}-32\,uv-24\,{v}^{2}+
16\,u+24\,v-8,\\
l&=u \left( 6\,{u}^{4}+{u}^{3}v-11\,{u}^{2}{v}^{2}+8\,u{v}
^{3}+8\,{v}^{4}-{u}^{3}+33\,{u}^{2}v-16\,u{v}^{2}-24\,{v}^{3}-22\,{u}^
{2}\right.\\
&\ \ \ \left.+8\,uv+24\,{v}^{2}-8\,v \right).
\end{align*}
As indicated in the figure, the image of the point $(\sqrt{2},0)$ under the involution $\iota_{2,0}$ is $(-\sqrt{2},0)$, and the image of $(-\sqrt{2},0)$
under $\iota_{0,1}$ is $(\frac{9}{7}+ \frac 37 \sqrt {2} ,\frac {10}{7}+\frac {9}{14}\sqrt {2})$. The image of the curve labeled -1 is the point $(0,1)$ as this is a singular point of that curve.
The Manin line through $(2,0)$ and $(0,1)$ is given by \[
L(u,v)=2-u-2v=0.
\]
In terms of variables
\[
(x,y)=(u,v)/L(u,v)
\]
the involutions $\iota_{2,0}$ and $\iota_{0,1}$ become the horizontal and vertical switches of the QRT map with matrices
\[
A^0=\begin{pmatrix}
0&6&\frac52\\
-2&0&-\frac12\\
-2&-2&-\frac12
\end{pmatrix},\qquad
A^1=\begin{pmatrix}
0&3&4\\
2&4&1\\
2&1&0
\end{pmatrix},
\]
i.e. we have
\begin{align*}
\iota_{2,0}&\mapsto \iota_1: (x,y)\rightarrow
\left(-{\frac { \left( 18\,x{y}^{2}+16\,xy+10\,{y}^{2}+4\,x+5\,y+1 \right)
 \left( 2\,y+1 \right) }{36\,x{y}^{3}+74\,x{y}^{2}+36\,{y}^{3}+35\,xy+
50\,{y}^{2}+9\,x+24\,y+4}}
,y\right),\\
\iota_{0,1}&\mapsto \iota_2: (x,y)\rightarrow
\left(x,{\frac {33\,{x}^{4}-26\,{x}^{3}y-5\,{x}^{3}-28\,{x}^{2}y-14\,{x}^{2}+x
+2\,y+1}{ 2\left( 18\,{x}^{2}y+13\,{x}^{2}+8\,xy+x-2\,y-1 \right)
 \left( x+1 \right) }}\right),
\end{align*}
preserving the ratio of biquadratics
\[
\frac{F_a}{F_b}={\frac {12\,{x}^{2}y-4\,x{y}^{2}+5\,{x}^{2}-4\,{y}^{2}-x-4\,y-1}{2(
3\,{x}^{2}y+2\,x{y}^{2}+4\,{x}^{2}+4\,xy+2\,{y}^{2}+x+y)}}.
\]
\end{example}

\section{Quartic pencils} \label{squar}
In this section, pencils of degree $N=4$ are considered. With $p$ and $q$ double base points, the 22-parameter map $\tau=\iota_{q}\circ\iota_{p}$ is measure-preserving with density $L(u,v)/F_a(u,v)$.

Let the quartic curve $F_a(u,v)=0$, with
\begin{align*}
F_a(u,v)&:=a_1 + a_2 u + a_3 v + a_4 u^2 + a_5 u v + a_6 v^2 + a_7 u^3 + a_8 u^2 v + a_9 u v^2 +a_{10} v^3\\
&\ \ \ \ +a_{11} u^4 + a_{12} u^3 v + a_{13} u^2 v^2 +a_{14} uv^3 +a_{15} v^4
\end{align*}
have double points at $p=(c,d)$ and $q=(e,f)$, i.e. at these points we require the function $F_a$ as well as its first partial derivatives $F_a^{(u)}$, $F_a^{(v)}$ to vanish. Generically the genus of such a curve is 1, the same as in the cubic case. Assuming that
\[
V:=c^3f^3-d^3e^3\neq0,\quad
W:=(cf-de)^2((cf+de)^2+2cdef)\neq0,
\]
we can solve for
\[
a_{{7}}=\frac PV,\
a_{{10}}=\frac QV,\
a_{{11}}=\frac R{VW},\
a_{{12}}=\frac S{VW},\
a_{{14}}=\frac T{VW},\
a_{{15}}=\frac U{VW},
\]
where the functions $P,Q,R,S,T,U$ can be found in Appendix D. If $V$ or $W$ vanishes one has to solve for other parameters. If $c\neq e$ one can solve for $a_1,a_2,a_3,a_4,a_5,a_7$ and if $d\neq f$ one can solve for $a_1,a_2,a_3,a_5,a_6,a_{10}$. The parameters $a_7,a_{10},a_{11},a_{12}$, $a_{14},a_{15}$ vanish when $d=e=0$ in the limit where both $c$ and $f$ go to infinity, leaving us with the most general biquadratic. For finite $p$ and $q$, we obtain
\begin{equation}\label{quart}
\begin{split}
F_a(u,v) &= \left( {u}^{2}{v}^{2}a_{{13}}+{u}^{2}va_{{8}}+u{v}^{2}a_{{9}}+{u}^{2}
a_{{4}}+uva_{{5}}+{v}^{2}a_{{6}}+ua_{{2}}+va_{{3}}+a_{{1}} \right) WV \\
&\ \ \ +
 \left( P{u}^{3}+Q{v}^{3} \right) W+{u}^{4}R+{u}^{3}vS+u{v}^{3}T+{v}^{4}U.
\end{split}
\end{equation}
As in the previous section, we have two involutions,
\begin{equation} \label{2invs2}
\iota_{p}(u,v):=(u,v)+z(c-u,d-v), \qquad \iota_{q}(u,v):=(u,v)+z(e-u,f-v).
\end{equation}
Here $z$ is again given by (\ref{sec}) and (\ref{expl}), where for the second involution $(c,d)$ should be replaced by $(e,f)$, but now
\begin{align*}
F_a^{(u)}(u,v)&=\left( 2{u}{v}^{2}a_{{13}}+2{u}va_{{8}}+{v}^{2}a_{{9}}+2{u}
a_{{4}}+va_{{5}}+a_{{2}}\right) WV \\
&\ \ \ + 3{u}^{2} PW+ 4{u}^{3}R+3{u}^{2}vS+{v}^{3}T,\\
F_a^{(v)}(u,v) &= \left( 2{u}^{2}{v}a_{{13}}+{u}^{2}a_{{8}}+2u{v}a_{{9}}+ua_{{5}}+2{v}a_{{6}}+a_{{3}} \right) WV \\
&\ \ \ +{v}^{3} QW+{u}^{3}S+3u{v}^{2}T+4{v}^{3}U,\\
F_a^{(u,u)}(u,v)&=\left( 2{v}^{2}a_{{13}}+2va_{{8}}+2a_{{4}}\right) WV + 6{u}PW+ 12{u}^{2}R+6{u}vS,\\
F_a^{(v,v)}(u,v) &= \left( 2{u}^{2}a_{{13}}+2ua_{{9}}+2a_{{6}} \right) WV +3{v}^{2}QW+6uvT+12{v}^{2}U,\\
F_a^{(u,v)}(u,v)&=\left( 4{u}{v}a_{{13}}+2{u}a_{{8}}+2{v}a_{{9}}+a_{{5}}\right) WV +3{u}^{2}S+3{v}^{2}T.\\
\end{align*}

Both involutions are anti measure-preserving, $\left(s_1(u-c)+s_2(v-d)\right)/F_a(u,v)$ is the density for $\iota_{p}$ and $\left(t_1(u-e)+t_2(v-f)\right)/F_a(u,v)$ is the density for $\iota_{q}$, where $s_i,t_i$ are arbitrary.
\begin{proposition}
The composition of the generalised Manin involutions (\ref{2invs2}) is an integrable map of the plane. It preserves the quartic pencil $P_{\alpha,\beta}(u,v)=0$ with (\ref{quart}), and it is measure-preserving with density  $L(u,v)/F_a(u,v)$,
where $L=0$ is the Manin line (\ref{L}).
\end{proposition}

\begin{example} \label{quarticexample}
Consider the quartic pencil where
\begin{equation}\label{Exampl8a}
F_a(u,v)={u}^{2} \left( 28\,u^2-24\,uv+12\,v^2+16\,u-8\,v-7 \right)
\end{equation}
is a product of a double line and an ellipse, and
\begin{equation}\label{Exampl8b}
F_b(u,v)=  \left(u -3\,v \right)  \left(2\,u + v-1\right)  \left(3\,u + v \right)  \left(u+ 5\,v-5 \right).
\end{equation}
is a product of four lines. All 10 base points are finite, the involution points are the singular base points $(0,0)$ and $(0,1)$. Some curves of the pencil are plotted in Figure \ref{F1}.

\newpage

\begin{figure}[htb]
\begin{center}
\includegraphics[width=9.5cm]{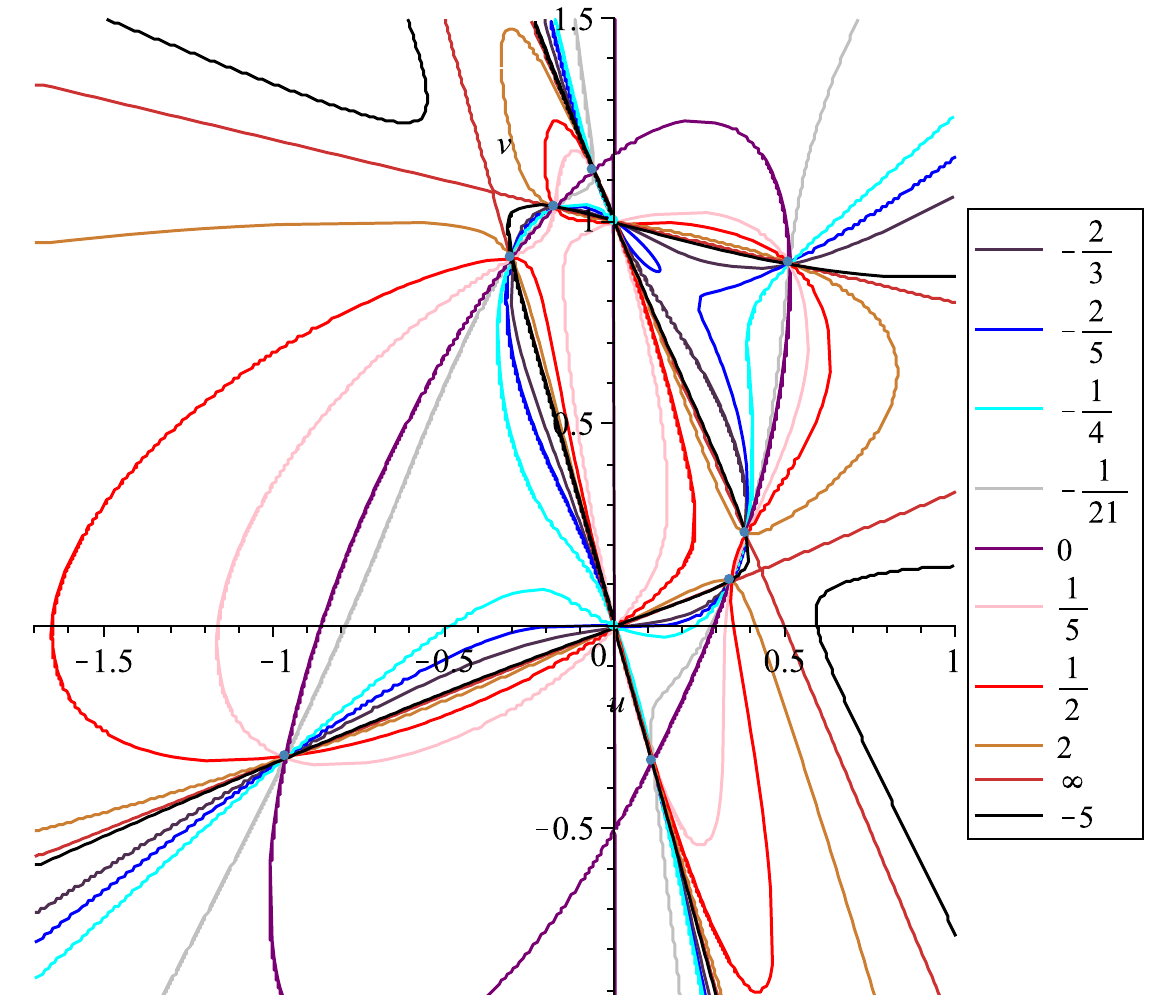}
\end{center}
\caption{\label{F1} Ten curves from the quartic pencil defined by (\ref{pen}), (\ref{Exampl8a}) and (\ref{Exampl8b}), labeled by  $-\beta/\alpha$.}
\end{figure}

The curve which contains the point $(-\frac32,\frac 3{10})$ and some of its iterates are plotted in Figure \ref{F2}.

\begin{figure}[htb]
\begin{center}
\includegraphics[width=7cm]{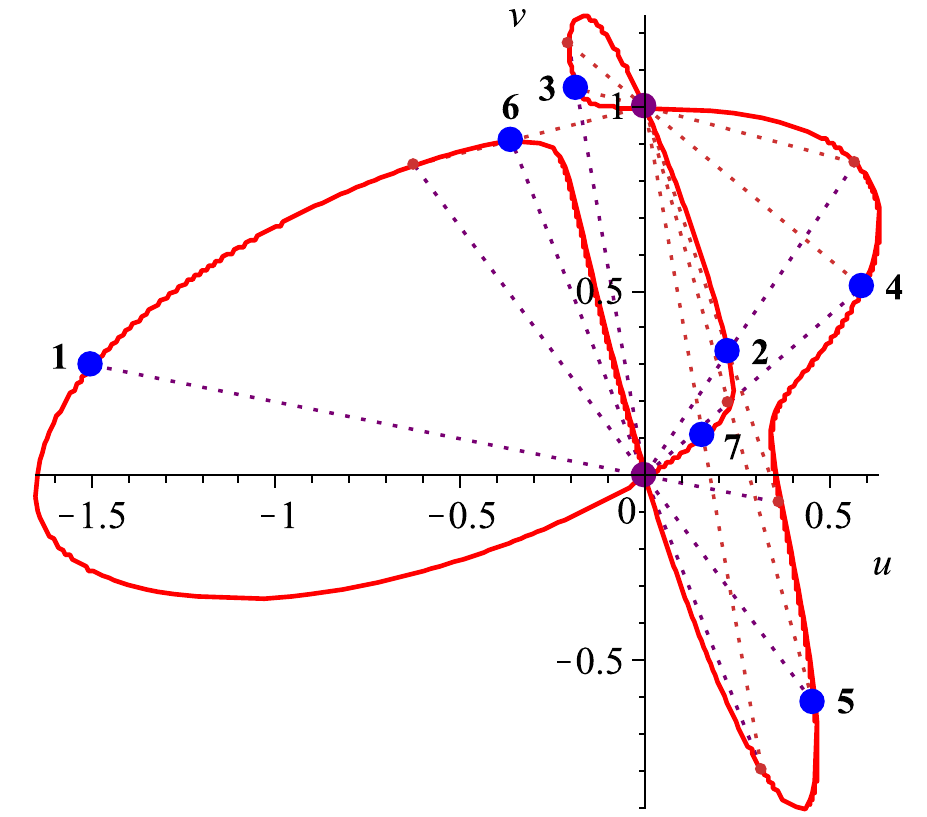}
\end{center}
\caption{\label{F2} Six iterations of the point $(-\frac32,\frac 3{10})$ under the Manin transformation (\ref{MI}), $\iota_{0,1}\circ\iota_{0,0}$.}
\end{figure}

The involutions are explicitly given by:
\begin{equation} \label{MI}
\iota_{0,0}(u,v)=(u,v)A,\qquad \iota_{0,1}(u,v)=(0,1)-3(u,v-1)B
\end{equation}
with
\[
A={\frac {154\,{u}^{2}-43\,uv+95\,{v}^{2}+3\,u-110\,v}{340\,{u}^{3}+176
\,{u}^{2}v-116\,u{v}^{2}+80\,{v}^{3}-154\,{u}^{2}+43\,uv-95\,{v}^{2}}},
\]
and
\[
B={\frac {25\,{u}^{2}-16\,uv+15\,{v}^{2}+16\,u-8\,v-7}{200\,{u}^{3}+
88\,{u}^{2}v-152\,u{v}^{2}+24\,{v}^{3}-13\,{u}^{2}+256\,uv-27\,{v}^{2}
-104\,u-18\,v+21}}.
\]

The set of base points of the pencil is the union of the disjoint sets of points where $A$ resp. $B$ are undefined. This is made clear in Figure \ref{F9}.

\begin{figure}[htb]
\begin{center}
\includegraphics[width=6.5cm]{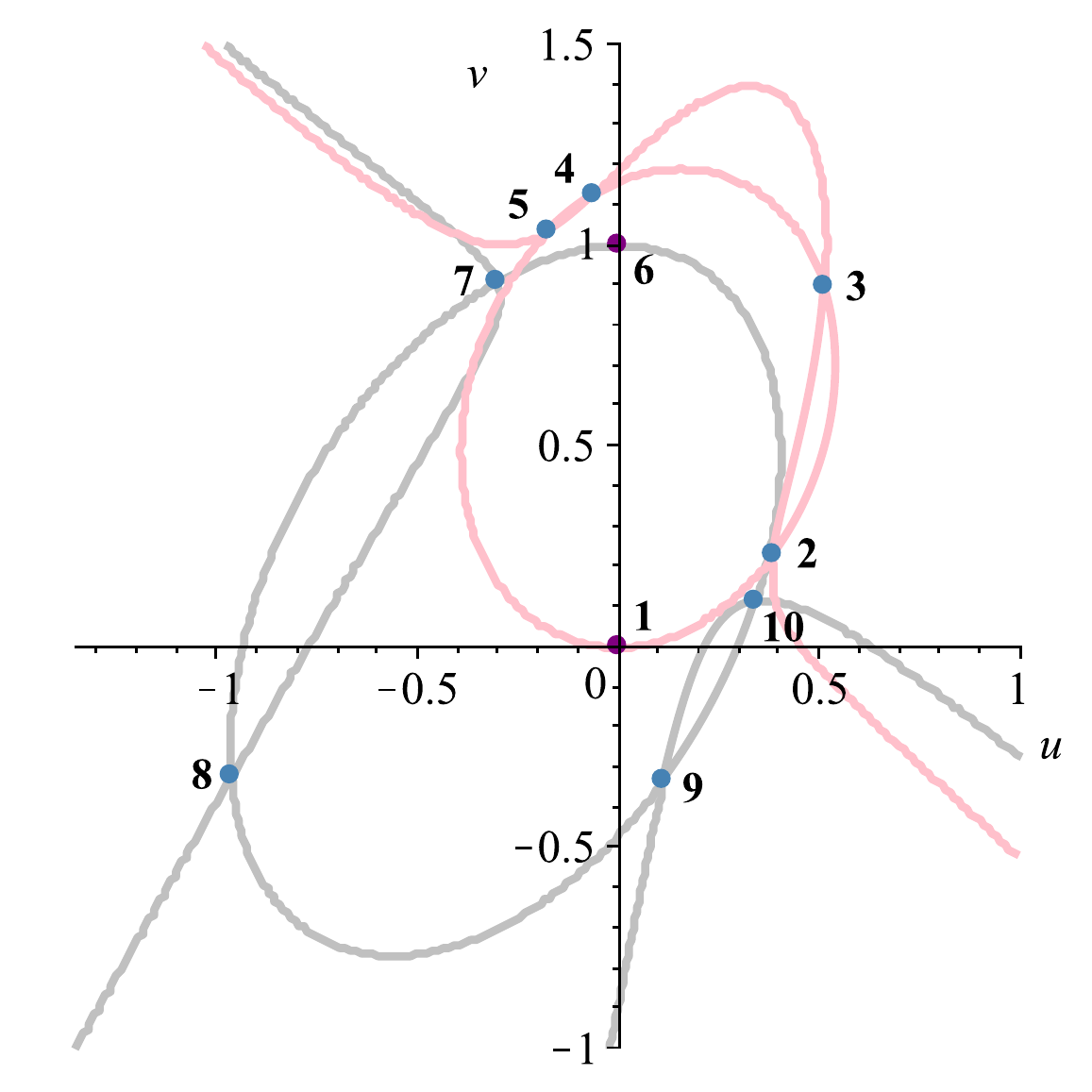}
\end{center}
\caption{\label{F9} The base points lie on curves defined by the numerators and denominators of $A$ (pink) and $B$ (grey).}
\end{figure}

We have $\iota_{0,1}(b_2)=b_4$, $\iota_{0,0}(b_7)=b_9$,
$\iota_{0,1}(b_3)=b_5$, and $\iota_{0,0}(b_8)=b_{10}$. To define the action of  $\iota_{0,0}$ at $b_2$, $b_3$, $b_4$ and $b_5$, one needs to blow up at these points. Similarly, for $\iota_{0,1}$ blow-ups are required at $b_7$, $b_8$, $b_9$ and $b_{10}$.

Performing a change of variables, $(x,y)=(1-u-v,-v)/u$, the involutions become
\begin{align*}
(x,y)&\rightarrow \left({\frac {15\,{y}^{2}x-150\,yx-3\,{y}^{2}-157\,x-58\,y+29}{110\,yx-15\,{
y}^{2}+3\,x+150\,y+157}},y\right)\\
(x,y)&\rightarrow \left(x,-{\frac {21\,{x}^{2}y-56\,{x}^{2}-6\,yx-50\,x-102\,y+206}{21\,{x}^{2}-
66\,yx-6\,x-66\,y-102}}
\right).
\end{align*}
preserving the ratio of biquadratics
\[
\frac{F_a}{F_b}=\frac {7\,{x}^{2}-22\,yx+3\,{y}^{2}-2\,x-30\,y-37}{(x-1)(5x+4)(y-3)(3y+1)},
\]
i.e. we obtain the QRT map with matrices
\[
A^0=\begin{pmatrix}
0&0&7\\
0&-22&-2\\
3&-30&-37
\end{pmatrix},\qquad
A^1=\begin{pmatrix}
15&-40&-15\\
-3&8&3\\
-12&32&12
\end{pmatrix}.
\]
\end{example}

The special involutions with base points at infinity, with $d=e=0$,
\[
\iota_1=\lim_{c\rightarrow\infty}\iota_{c,0}, \qquad \iota_2=\lim_{f\rightarrow\infty}\iota_{0,f}
\]
are anti measure-preserving. The horizontal switch $\iota_1$ has $\left(s_1v+s_2\right)/F_a(u,v)$ as density, and the vertical switch $\iota_2$ has density
$\left(t_1u+t_2\right)/F_a(u,v)$, for arbitrary $s_i,t_i$. This implies that $\iota_{c,d}\circ \iota_1$ is measure-preserving with density
$
\left(v-d\right)/F_a(u,v)$
and, that $\tau=\iota_2\circ \iota_1$ is measure-preserving with density $1/F_a(u,v)$.
This map $\tau$ is the QRT map.

\section{Roots of generalised Manin transformations} \label{roots}
In this section, we specify subfamilies of generalised Manin transformations which admit a root, i.e. maps that can be written as $\tau=\rho^2$, such as the 12-parameter symmetric QRT map.

Recall that the QRT map is obtained by considering a $N=4$ pencil with double base points at $(0,z)$ and
$(z,0)$ as involution points, and taking the limit where $z\rightarrow\infty$. In that limit the quartic polynomials
$F_a(u,v)$ and $F_b(u,v)$ become biquadratic polynomials.
A special case of the QRT map, the so called symmetric QRT map, arises when the biquadratic polynomials are symmetric in $u,v$, i.e. they are invariant under what Duistermaat calls the {\em symmetry switch} \cite[Section 10.1]{Dui}
\begin{equation} \label{dsw}
\sigma(u,v)=(v,u).
\end{equation}

The symmetric QRT map $\tau=\iota_2\circ\iota_1$ equals $\tau=\rho^2$, where $\rho=\sigma \circ \iota_1 = \iota_2 \circ \sigma$ is called the QRT-root.

We note that $\sigma$ may arise as a Manin involution corresponding to the base point $(z,-z)$ in the limit where $z\rightarrow\infty$, and we provide an example of a map which can be written as a Manin transformation in various different ways.
\begin{example}
The Lyness map
\[
\lambda:(u,v)\rightarrow \left(v,\frac{v+a}{u}\right)
\]
leaves invariant the pencil of cubic curves
\[
\alpha(u+1)(v+1)(u+v+a)+\beta uv=0.
\]
The pencil has finite base points $p_1=(-1,0)$, $p_2=(0,-1)$, $p_3=(-a,0)$, $p_4=(0,-a)$, which gives rise to involutions
\begin{align*}
\iota_{p_1}(u,v)&=\left(\frac{a(u+1)+v}{uv},\frac{a+v}{u}\right),\\
\iota_{p_2}(u,v)&=\left(\frac{a+u}{v},\frac{u+a(v+1)}{uv}\right),\\
\iota_{p_3}(u,v)&=\left(\frac{u+a(v+1)}{uv},\frac{a(uv+v+1)+u}{u(u+a)}\right),\\
\iota_{p_4}(u,v)&=\left(\frac{a(uv+u+1)+v}{v(v+a)},\frac{v+a(u+1)}{uv}\right),
\end{align*}
as well as base points at infinity $p_5=\lim_{x\rightarrow \infty}(0,x)$,
$p_6=\lim_{x\rightarrow \infty}(x,0)$ (these have multiplicity two), and $p_7=\lim_{x\rightarrow \infty}(x,-x)$, which yield the involutions
\[
\iota_{p_5}(u,v)=\left(u,\frac{a+u}{v}\right),\quad
\iota_{p_6}(u,v)=\left(\frac{a+v}{u},v\right),\quad
\iota_{p_7}(u,v)=(v,u).
\]

The latter Manin involution, $\iota_{p_7}=\sigma$, is the symmetry switch of the pencil of curves, it is a reversing symmetry for the Lyness map, and it corresponds to negation in the group law of the cubic \cite{BR}. The other involutions are also reversing symmetries, generated by $\lambda$ and $\sigma$:
\[
\iota_{p_1}=\sigma\circ \lambda^2,\
\iota_{p_2}=\lambda^2 \circ \sigma,\
\iota_{p_3}=\lambda^3 \circ \sigma,\
\iota_{p_4}=\sigma \circ \lambda^3,\
\iota_{p_5}=\lambda \circ \sigma,\
\iota_{p_6}=\sigma \circ \lambda.
\]
Thus the Lyness map is a QRT root: we have $\iota_{p_5}=\iota_2$ and $\iota_{p_6}=\iota_1$, see (\ref{sevol}), and hence
\[
\lambda = \sigma \circ \iota_1=\iota_2 \circ \sigma.
\]
On the other hand, it can also be written as the composition of two Manin involutions which correspond to finite involution points
\[
\lambda=\iota_{p_1}\circ\iota_{p_4}
=\iota_{p_3}\circ\iota_{p_2},
\]
or as the composition of a Manin involution which corresponds to a finite involution point and a horizontal or vertical switch
\[
\lambda=\iota_{p_2} \circ \iota_2=\iota_1 \circ \iota_{p_1}.
\]
\end{example}

In the sequel we call a transformation $\sigma$ a symmetry switch of the pencil $P=0$ if $\sigma$ is a symmetry of
$P$ and it is an involution.

\begin{theorem} \label{theroot}
Let $\sigma$ be a symmetry switch of the pencil $P_{\alpha,\beta}(u,v)=0$ which maps lines to lines. Then
\[
\tau_p = \iota_{\sigma(p)}\circ \iota_p = \rho_p^2,\quad \text{ with } \rho_p=\sigma \circ \iota_p=\iota_{\sigma(p)} \circ \sigma.
\]
We call $\rho_p$ the root of $\tau_p$.
\end{theorem}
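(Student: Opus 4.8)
The plan is to reduce the whole statement to a single conjugation identity and then deduce everything formally. Note first that, since $\sigma$ is an involution, the second expression $\rho_p=\iota_{\sigma(p)}\circ\sigma$ gives
\[
\rho_p^2=(\iota_{\sigma(p)}\circ\sigma)\circ(\sigma\circ\iota_p)=\iota_{\sigma(p)}\circ\sigma^2\circ\iota_p=\iota_{\sigma(p)}\circ\iota_p=\tau_p ,
\]
so that the theorem rests entirely on the equality $\sigma\circ\iota_p=\iota_{\sigma(p)}\circ\sigma$ (which both makes the two stated formulas for $\rho_p$ coincide and, after post\nobreakdash-composing with $\sigma$, reads $\sigma\circ\iota_p\circ\sigma=\iota_{\sigma(p)}$). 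In words: conjugating the generalised Manin involution with involution point $p$ by $\sigma$ produces the generalised Manin involution whose involution point is $\sigma(p)$.

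To prove this I would argue geometrically rather than through the explicit formula (\ref{Man})--(\ref{sec}). For generic $(u,v)$, let $C$ be the member of the pencil through $(u,v)$ and let $\ell$ be the line through $p$ and $(u,v)$; by the defining property of $\iota_p$ — and the requirement that $p$ have multiplicity $D-2$ on $C$ when $D>2$ — the intersection $\ell\cap C$ consists of $p$ counted $D-2$ times together with the two points $(u,v)$ and $\iota_p(u,v)$. Now apply $\sigma$: since $\sigma$ is a symmetry of the pencil, $\sigma(C)$ is again a member of the pencil and it contains $\sigma(u,v)$; since $\sigma$ maps lines to lines, $\sigma(\ell)$ is a line through $\sigma(p)$ and $\sigma(u,v)$; and since $\sigma$ is a (bi)regular isomorphism near the points involved it preserves intersection multiplicities, so $\sigma(p)$ still occurs with multiplicity $D-2$ on $\sigma(C)$. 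Hence $\sigma(\ell)\cap\sigma(C)$ consists of $\sigma(p)$ counted $D-2$ times together with $\sigma(u,v)$ and $\sigma(\iota_p(u,v))$, which is precisely the assertion $\iota_{\sigma(p)}(\sigma(u,v))=\sigma(\iota_p(u,v))$. As this holds on a Zariski\nobreakdash-dense set of $(u,v)$ and both sides are rational maps, it holds identically, giving $\sigma\circ\iota_p=\iota_{\sigma(p)}\circ\sigma$.

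Two points require a little care. First, one must check that $\sigma(p)$ is an admissible involution point, i.e. (for $D>2$) a base point of the pencil of multiplicity $D-2$: this is automatic, because $\sigma$ preserves the pencil and hence permutes its base points, and being locally an isomorphism it preserves the multiplicity of a point on a curve; for $D=2$ there is nothing to verify since every point is admissible. Second, the degenerate configurations of $\ell$ — when $\ell$ meets a further base point, is tangent to $C$, or $(u,v)$ is itself a base point (the lowered\nobreakdash-degree cases $t=2,1,0$ discussed after the quadratic involution) — are excluded from the open set on which we argue, and the identity extends to them by rationality. The only genuine obstacle is therefore the bookkeeping of intersection multiplicities under $\sigma$; once one grants that the hypothesis ``$\sigma$ maps lines to lines'' forces $\sigma$ to be biregular (a projective, resp.\ fractional affine, isomorphism) near the relevant points, the argument is immediate. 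Finally, $\iota_p$, $\iota_{\sigma(p)}$ and $\sigma$ being involutions, the relation $\sigma\circ\iota_p\circ\sigma=\iota_{\sigma(p)}$ is consistent, and together with $\sigma^2=\mathrm{id}$ it yields $\tau_p=\iota_{\sigma(p)}\circ\iota_p=\rho_p^2$ with $\rho_p=\sigma\circ\iota_p=\iota_{\sigma(p)}\circ\sigma$, as claimed. (A purely computational alternative — substitute the explicit $z$ of (\ref{sec}) and use $\sigma^*P_{\alpha,\beta}\propto P_{\alpha',\beta'}$ — also works but is messier and depends on the concrete form of $\sigma$.)
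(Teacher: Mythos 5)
Your proposal is correct and takes essentially the same route as the paper: both reduce the theorem to the conjugation identity $\sigma\circ\iota_p=\iota_{\sigma(p)}\circ\sigma$, established by observing that $\sigma$ sends the collinear triple $p,\,q,\,\iota_p(q)$ lying on a pencil curve to a collinear triple on a pencil curve (with $\sigma(p)$ of the same multiplicity), and then conclude via $\tau_p=\iota_{\sigma(p)}\circ\sigma^2\circ\iota_p=\rho_p^2$. Your extra care about intersection multiplicities and degenerate configurations only makes explicit what the paper leaves implicit.
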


\begin{proof}
Let $q$ be a point on a curve $C$ in a pencil of degree $N$, and let the involution point $p$ be a singular point of multiplicity $N-2$. Note that $\sigma(p)$ has the same multiplicity as $p$. Defining $r=\iota_p(q)\in C$, the points $p,q,r$ are collinear. Because $\sigma$ maps lines to lines the points $\sigma(p),\sigma(q),\sigma(r)$ are also collinear. Because $\sigma$ is a symmetry, both $\sigma(q),\sigma(r)$ are on the curve $C$. Therefore we must have
$\sigma(r)=\iota_{\sigma(p)}(\sigma(q))$, cf. Figure \ref{FS}. And hence $\tau_p = \iota_{\sigma(p)}\circ \iota_p = \iota_{\sigma(p)}\circ \sigma^2 \circ \iota_p = \sigma \circ \iota_p \circ \sigma \circ \iota_p = \rho_p^2$.
\end{proof}

As Theorem \ref{theroot} concerns symmetry switches which map lines to lines, it would be worthwhile to determine which projective collineations are symmetry switches and to study the corresponding pencils. In the next subsections we consider the symmetric case, and we introduce a more general linear symmetry switch. In Appendix E we show that the highest dimensional solution yields pencils comprising singular curves only.

\noindent

\subsection{Symmetric generalised Manin transformations}
We require that the symmetric quartic polynomials $F_a$ and $F_b$, where
\begin{align*}
F_a&=a_{{1}}+a_{{2}} \left( u+v \right) +uva_{{3}}+ \left( {u}^{2}+{v}^{2}
 \right) a_{{4}}+ \left( {u}^{2}v+u{v}^{2} \right) a_{{5}}+{u}^{2}{v}^
{2}a_{{6}}\\
&\ \ \ + \left( {u}^{3}+{v}^{3} \right) a_{{7}}+ \left( {u}^{3}v+u{
v}^{3} \right) a_{{8}}+ \left( {u}^{4}+{v}^{4} \right) a_{{9}},
\end{align*}
have a singular point at $p=(c,d)$. Solving the constraints for $F_a$ for $a_7,a_8,a_9$ gives
\begin{align*}
a_7&=-\frac{4\,a_{{1}}+ \left( 3\,c+3\,d \right) a_{{2}}+2\,cda_{{3}}+ \left( 2\,{
c}^{2}+2\,{d}^{2} \right) a_{{4}}+ \left( {c}^{2}d+c{d}^{2} \right) a_
{{5}}
}{
\left( c+d \right)  \left( {c}^{2}-cd+{d}^{2} \right)}\\
a_8&=-\frac{1}{
\left( {c}^{2}-cd+{d}^{2} \right)  \left( {c}^{4}+4\,{c}^{2}{d}^{2}+{
d}^{4} \right)  \left( c+d \right) ^{2}
}\Big(
-12\,{c}^{2}{d}^{2}a_{{1}}+ \left( {c}^{5}+d{c}^{4}-8\,{d}^{2}{c}^{3}\right. \\
&\ \ \ \left.-
8\,{d}^{3}{c}^{2}+{d}^{4}c +{d}^{5} \right) a_{{2}}+ \left( {c}^{6}+d{c
}^{5}+{d}^{2}{c}^{4}-4\,{d}^{3}{c}^{3}+{d}^{4}{c}^{2}+{d}^{5}c+{d}^{6}
 \right) a_{{3}} \\
 &\ \ \ + \left( 2\,d{c}^{5}-4\,{d}^{2}{c}^{4} -4\,{d}^{4}{c}^{
2}+2\,{d}^{5}c \right) a_{{4}}+ \left( {c}^{7}+3\,d{c}^{6}+3\,{d}^{2}{
c}^{5}+{d}^{3}{c}^{4}+{d}^{4}{c}^{3} \right.\\
&\ \ \ \left. +3\,{d}^{5}{c}^{2}+3\,{d}^{6}c+{d}
^{7} \right) a_{{5}}+ \left( 2\,{c}^{7}d+2\,{c}^{6}{d}^{2} +2\,{c}^{5}{
d}^{3}+4\,{c}^{4}{d}^{4}+2\,{c}^{3}{d}^{5}+2\,{c}^{2}{d}^{6}\right.\\
&\ \ \ \left. +2\,c{d}^{
7} \right) a_{{6}}\Big)
\\
a_9&=\frac{1}{\left( {c}^{2}-cd+{d}^{2} \right)  \left( {c}^{4}+4\,{c}^{2}{d}^{2}+{
d}^{4} \right)  \left( c+d \right) ^{2}
}\Big(\left( 3\,{c}^{4}+3\,{c}^{3}d+12\,{c}^{2}{d}^{2}+3\,c{d}^{3}\right.\\
&\ \ \ \left. +3\,{d}^{
4} \right) a_{{1}} + \left( 2\,{c}^{5}+5\,d{c}^{4}+11\,{d}^{2}{c}^{3}+
11\,{d}^{3}{c}^{2}+5\,{d}^{4}c+2\,{d}^{5} \right) a_{{2}}+ \left( 2\,d
{c}^{5}\right.\\
&\ \ \ \left.+2\,{d}^{2}{c}^{4}+6\,{d}^{3}{c}^{3}+2\,{d}^{4}{c}^{2}+2\,{d}^{
5}c \right) a_{{3}}+ \left( {c}^{6}+d{c}^{5}+7\,{d}^{2}{c}^{4}+2\,{d}^
{3}{c}^{3}+7\,{d}^{4}{c}^{2}\right.\\
&\ \ \ \left. +{d}^{5}c+{d}^{6} \right) a_{{4}}+ \left(
d{c}^{6}+3\,{d}^{2}{c}^{5}+4\,{d}^{3}{c}^{4}+4\,{d}^{4}{c}^{3}+3\,{d}^
{5}{c}^{2}+{d}^{6}c \right) a_{{5}}+ \left( {c}^{6}{d}^{2}\right.\\
&\ \ \ \left.+{c}^{5}{d}^
{3}+{c}^{3}{d}^{5}+{c}^{2}{d}^{6} \right) a_{{6}}\Big)
\end{align*}
and similar expressions are obtained for $b_7,b_8,b_9$. Taking $\sigma(u,v)=(v,u)$, one defines $\rho_p = \sigma \circ \iota_p$ and verifies that $\rho_p = \iota_{\sigma(p)} \circ \sigma$. The symmetric QRT-root is obtained by considering the limit $d\rightarrow\infty$ (in which $a_7,a_8,a_9,b_7,b_8,b_9\rightarrow 0$), or by performing a fractional affine transformation explained in section \ref{QRTform}.

One can also solve the constraints for other variables, depending on what variables one chooses to be non-zero
\begin{example} \label{symex}
Setting $a_4=1,a_3=a_5=a_6=a_7=a_8=0$ and $b_3=1,b_4=b_5=b_6=b_7=b_8=0$, both polynomials $F_a$ and $F_b$ have singular points at both $(0,1)$
and $(1,0)$ if
\[
a_1=a_9=-\frac{1}{2},a_2=0,b_1=\frac34,b_2=-1,b_9=\frac14.
\]
Thus we obtain the map
\[
(u,v)\rightarrow (v,u)-2\,{\frac { {u}^{4}+{v}^{4}-2\,{u}^{3}+2\,u-1 }{{u}^{4}+{v}^{4}-4\,{u}^{3}+6\,{u}^{2}-4\,u+1}}(v,
u-1),
\]
which preserves the pencil
\[
 \alpha\left( {u}^{4}+{v}^{4}-2({u}^{2}+{v}^{2})+1 \right)+ \beta
 \left( {u}^{4}+{v}^{4}+4(uv-u-v)+3 \right)
=0.
\]
After the transformation, $(x,y)=(u,v)/(1-u-v)$, the map becomes the composition of $(x,y)\rightarrow(y,x)$ and the horizontal switch which preserves the ratio of biquadratics
\[
\frac{F_a}{F_b}={\frac {2\,{x}^{2}{y}^{2}+8\,{x}^{2}y+8\,x{y}^{2}+4\,{x}^{2}+12\,xy+4
\,{y}^{2}+4\,x+4\,y+1}{2\,{x}^{2}{y}^{2}+8\,{x}^{2}y+8\,x{y}^{2}+6\,{x
}^{2}+16\,xy+6\,{y}^{2}+8\,x+8\,y+3}}.
\]
\end{example}

\subsection{Linear symmetry switches}
We introduce a symmetry switch that is more general than (\ref{dsw}), but which is still linear.
In terms of
\[
U=(u,v),\ V=(b,-a),\ W=(ad-bc,ae-bd),\ E = V\cdot W,\ G = G(U) = U\cdot W
\]
we define
\begin{equation} \label{lsw}
\sigma_{a,b,c,d,e}:U\rightarrow U-\frac{2G}{E}V.
\end{equation}
The `symmetric switch' given by (\ref{dsw}) is a special case of (\ref{lsw}), we have $\sigma=\sigma_{a,a,c,d,c}$ and the matrices of $\sigma$ and $\sigma_{a,b,c,d,e}$ are conjugate. In the sequel we will omit the index $_{a,b,c,d,e}$. The linear transformation $\sigma$ given by (\ref{lsw}) is a reflection in the line through $(0,0)$ perpendicular to $W$
along a line with direction $V$, i.e. we have
\[
\sigma(V)=-V,\qquad \sigma(JW)=JW, \qquad J=\begin{pmatrix} 0 & 1 \\ -1 & 0 \end{pmatrix}.
\]

Importantly, $\sigma$ (\ref{lsw}) leaves the linear respectively quadratic forms
\[
L=L(U)=au+bv,\qquad Q=Q(U)=cu^2+2duv+ev^2
\]
invariant (and it also negates the linear form $G$), that is
\[
L(\sigma(U))=L(U),\qquad Q(\sigma(U))=Q(U), \qquad G(\sigma(U))=-G(U).
\]

For $N=2$ the most general pencil which admits $\sigma$ (\ref{lsw}) as a symmetry is given by
\begin{equation} \label{fafb}
F_a=a_1+a_2L+a_3L^2+a_4Q,\ F_b=b_1+L+L^2+Q.
\end{equation}
Note that the constants $b_2,b_3,b_4$ can be absorbed by the other constants,
\begin{align*}
(a,b)&\rightarrow \frac{1}{b_2} (a,b) \\
(c,d,e)&\rightarrow \frac{1}{b_4} (c,d,e) + \left(1-\frac{b_3}{b_2^2}\right)\frac{1}{b_4} (a^2,ab,b^2).
\end{align*}
We are still free to choose the coordinates of $p$, so in total the degree $N=2$ family of maps which admit a root
has 12 parameters.

\begin{proposition} \label{quadroot}
The root $\rho_p=\sigma \circ \iota_p$, where $\sigma$ is given by (\ref{lsw}) and $\iota_p$ by (\ref{evol}), is an integrable map of the plane. It preserves each curve of the quadratic pencil $P_{\alpha,\beta}(u,v)=0$ with (\ref{pen}) and (\ref{fafb}), and it is measure-preserving with density  $(F_a(U)(L(U)-L(p))^{-1}$.
\end{proposition}

\begin{example} \label{liswex}
Let $(a,b,c,d,e)=(1,2,-3,4,5)$, $(a_1,a_2,a_3,a_4)=(1,-2,-3,4)$, and $b_1=1$. Then
\begin{equation} \label{suv}
\sigma(u,v)=\frac{1}{23}\begin{pmatrix}
-17 & 12\\
20 & 17
\end{pmatrix}
\begin{pmatrix} u \\ v \end{pmatrix}
\end{equation}
and
\[
F_a=-15\,{u}^{2}+20\,uv+8\,{v}^{2}-2\,u-4\,v+1,\
F_b=-2\,{u}^{2}+12\,uv+9\,{v}^{2}+u+2\,v+1.
\]
The point $s=(1/2,-1)$ is on the curve
\begin{equation}\label{Exampl13}
0=P_{8,7}(u,v)=-134\,{u}^{2}+244\,uv+127\,{v}^{2}-9\,u-18\,v+15.
\end{equation}
Choosing $p=(2,-1)$ we find $r=\iota_p(s)=(-160/67,-1)$. The points
\[
\sigma(p)=(-2,1),\quad
\sigma(s)=\left(-\frac{41}{46},-\frac{7}{23}\right),\quad
\sigma(r)=\left(\frac{1916}{1541},-\frac{4339}{1541}\right)
\]
are collinear, and
\[
\iota_p(\sigma(r))=\left(\frac{259627}{86963}, \frac{118690}{86963}\right),\quad
\iota_{\sigma(p)}(r)=\left(-\frac{5651}{3781}, \frac{13630}{3781}\right).
\]
It can be seen, see Figure \ref{FS}, that $\sigma(\iota_p(\sigma(r)))=\iota_{\sigma(p)}(r)$.

\begin{figure}[htb]
\begin{center}
\includegraphics[width=9cm]{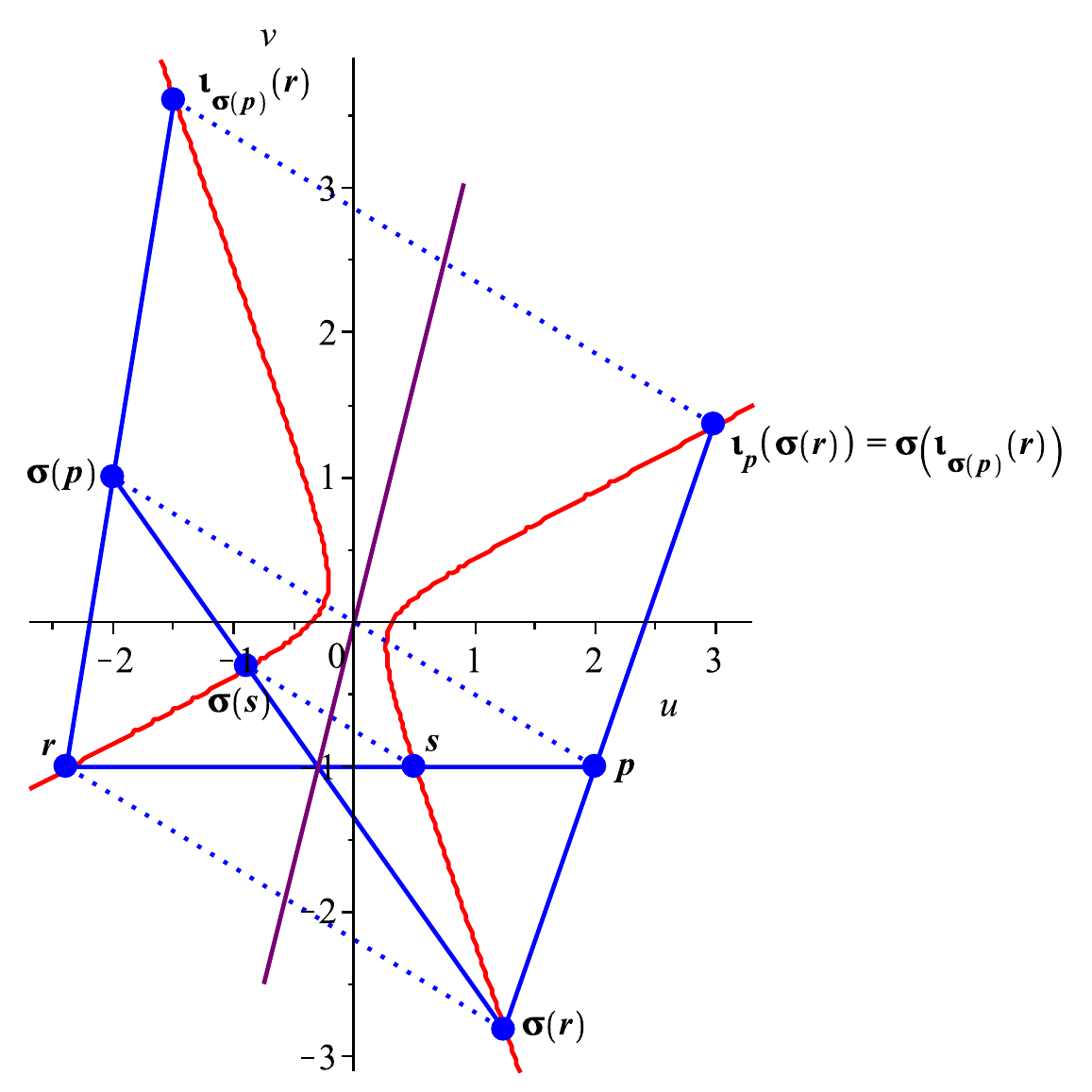}
\end{center}
\caption{\label{FS} A degree 2 curve, given by (\ref{Exampl13}), which admits the symmetry switch (\ref{suv}).
The symmetry switch is a reflection in the line through $(0,0)$ perpendicular to $W=(10,-3)$ (purple), in the direction $(2,-1)$ (dotted).}
\end{figure}

After a transformation, with new coordinates
\[
(x,y)=\left(-{\frac {3\,u-23+29\,v}{2\,u+4\,v}},-23\,{\frac {u-1+v}{2\,u+4\,v}}\right),
\]
we have that $\sigma$ switches $x$ and $y$, we have $\iota_{p}\rightarrow\iota_1: (x,y)\rightarrow \left(f(x,y),y\right)$, $ \iota_{\sigma(p)}\rightarrow\iota_2: (x,y)\rightarrow \left(x,f(y,x)\right)$, where
\[
f(x,y)={\frac {12\,{y}^{3}-213\,xy+651\,{y}^{2}-5966\,x+12084\,y-3268}{12\,xy
+213\,x+213\,y+5966}},
\]
and the preserved ratio is
\[
\frac{F_a}{F_b}={\frac {91\,{x}^{2}-186\,xy+91\,{y}^{2}+20\,x+20\,y-836}{22\,{x}^{2}-
48\,xy+22\,{y}^{2}-49\,x-49\,y-1710}}
.
\]
\end{example}

For $N=3$ the most general pencil left invariant by $\sigma$ (\ref{lsw}) is
\[
F_a=a_1+a_2L+a_3L^2+a_4Q+a_5L^3+a_6LQ,\ F_b=b_1+L+L^2+Q+b_5L^3+b_6LQ.
\]
We require that the involution point $p$ is a point on both $F_a=0$ and $F_b=0$ and
thus we have a 14 parameter family of maps which admit a root. In the cubic case the root
is measure-preserving with density $1/F_a(u,v)$.

For $N=4$ the most general pencil invariant under $\sigma$ (\ref{lsw}) is defined by
\begin{align*}
F_a&=a_1+a_2L+a_3L^2+a_4Q+a_5L^3+a_6LQ+a_7L^4+a_8L^2Q+a_9Q^2,\\
F_b&=b_1+L+L^2+Q+b_5L^3+b_6LQ+b_7L^4+b_8L^2Q+b_9Q^2.
\end{align*}
Here we require that the involution point $p$ is a double point of $F_a=0$ and $F_b=0$,
which gives 6 constraints. Thus we are left with a 16-parameter family whose square root can be taken.
In the quartic case the root is measure-preserving with density $(L(U)-L(p))/F_a(U)$.

In \cite{CMMOQ2,KCMMOQ} it was shown that the Kahan discretisation for several classes of ODE systems of the form
\def\X{{\bf x}}
\[
 \frac{d}{dt}\begin{pmatrix}x\\y\end{pmatrix} = \varphi(x,y) \begin{pmatrix}
0 & 1  \\
-1 & 0 \end{pmatrix}  \nabla H(x,y)
\] and $\varphi(x,y)$ and $H(x,y)$ are scalar functions, can be geometrically understood as the root of a generalised Manin transformation. These classes of ODE systems include physical applications such as: a two-dimensional sub-system of the three-dimensional non-holonomic Suslov problem which describes the motion of a rigid body under the constraint that a certain component of the angular velocity vector vanishes, the reduced Nahm equations \cite{Nahm} corresponding to tetrahedrally symmetric monopoles of charge 3, and the reduced Nahm equations for octahedrally symmetric monopoles of charge 4.

For canonical Hamiltonian system with cubic $H$, it was shown in \cite{PSS} that the Kahan  map can be represented
in six different ways as a composition of two Manin involutions, and the geometry of the base points was shown to be characteristic for Kahan maps. A similar geometric characterisation for the Kahan discretisation of planar quadratic
Hamiltonian vector fields with a linear Poisson tensor and with a quadratic Hamilton function was given in \cite{PS}.

We conclude with an example from the literature, \cite[section 3]{VGR}, to illustrate how the singularities of the pencil determine the QRT form of the mapping. Using projective coordinates $u=x/z,v=y/z$, the map \cite[equation (8)]{VGR} reads
\[
\rho\left(u,v\right)=\left(
\frac {u(v+1)({q}^{2}-1)+2\,v}{uv(q-1)-u(q+1)+2\,v},-
\frac {uv(q+1)-u(q-1)-2\,v}{uv(q-1)-u(q+1)+2\,v}\right).
\]
It has an invariant of degree 4,
\[
K={\frac { \left( v+1 \right)  \left( uv+u-2\,v \right)  \left( 2\,u-v-1
 \right) }{ \left( v-1 \right) ^{2} \left(
 q^2u(v+1)+2u(u-v-1)+2\,v
 \right) }},
\]
which has two singular base points, namely at $p=(1,1)$ and at $(\infty,0)$. Geometrically the map is understood as the root of a generalised Manin transformation,
\[
\rho = \sigma \circ \iota_1 = \iota_p \circ \sigma,
\]
where $\rho^2=\iota_p\circ\iota_1$. The horizontal switch takes the simple form
\[
\iota_1(u,v)=(v/u,v),
\]
and the symmetry switch $\sigma=\iota_{q+1,1}$ is the projective collineation
\[
\sigma(u,v)=\left(u,v\right)-\left(1+\frac{2q}{2u+(q-1)v-(q+1)}\right)(u-q-1,v-1).
\]
In coordinates
\[
x = -\frac{qv-q+2u+v-3}{2q(v-1)}, y = \frac{q+1-v}{q(v-1)}
\]
the points $(1,1)$ and $(q+1,1)$ are mapped to $(0,\infty)$ and $(\infty,-\infty)$ respectively. The map $\sigma$ becomes the standard symmetry switch, and the integral $K$ is symmetric in $x,y$. Hence, in these coordinates the map is a symmetric QRT map.

\section{Conclusions}
Noting that both Manin transformations and QRT maps are compositions of involutions that switch the 2 points in the intersection of a curve of the invariant pencil with a straight line through a given point, we have constructed classes of such maps which preserve pencils of degree $N=2,3,4$. We have shown how these maps are projectively equivalent to QRT maps, and we have identified classes of maps which are equivalent to roots of QRT maps. For a special configuration of the base points of a cubic pencil, Manin transformations have been shown to arise as the Kahan discretisation of a quadratic planar Hamiltonian vector field in \cite{PSS,PS}. In \cite{vdK}, the current construction is generalised by allowing involutions of the type $\iota_p$, where $p$ is not fixed but lies on a special curve parametrised by the parameter of the pencil, cf. \cite{CKZ} where Manin involutions of this kind were obtained from an open boundary reduction from the $Q1_{\delta=0}$ lattice equation.

\section*{Acknowledgments} This research was supported by the Australian Research Council [DP140100383]. We have extensively used the software Maple \cite{MAP}. This includes drawing the figures, and verifying the densities for the Manin involutions in the cases $N=2,3,4$.

\section*{Appendix A}
We provide the proof of Theorem \ref{T1}.

It is convenient to use abbreviated notation $F_a^{(i)}:=F_a^{(z,\overset{i}{\ldots},z)}(0)$. We start with the Taylor expansion about $z=0$, equation (\ref{TF}), and Taylor expand it about $z=1$:
\[
F_a(z)=\sum_{i=0}^N c_i(z-1)^i,\quad \text{ with } \quad c_i=\sum_{j=i}^N \frac{F_a^{(j)}}{i!(j-i)!}.
\]
As $c_i=0$ for $i<N-2$ we have
\begin{align*}
F_a(z)=\frac{(z-1)^{N-2}}{N!}\Big(&
N(N-1)(F_a^{(N-2)}+F_a^{(N-1)}+\frac12F_a^{(N)})+N(F_a^{(N-1)}\\
&+F_a^{(N)})(z-1)+F_a^{(N)}(z-1)^2\Big).
\end{align*}
Due to $\sum_{i=0}^{N-3}(-1)^ic_i=0$ we have
\[
\frac12(N-1)(N-2)F_a^{(N)}+N(N-2)F_a^{(N-1)}+N(N-1)F_a^{(N-2)}=(-1)^N N!F_a^{(0)}
\]
and hence
\begin{equation} \label{faz}
F_a(z)=\frac{(z-1)^{N-2}}{N!}\left(F_a^{(N)} z^2+(NF_a^{(N-1)}+(N-2)F_a^{(N)})z +(-1)^N N! F_a^{(0)}
\right),
\end{equation}
and similarly for $F_b(z)$. Substituting these into the equation $F_a(z)F_b(0)=F_b(z)F_a(0)$, after dividing out $z(z-1)^{N-2}$ the constant term vanishes, and we are left with a linear equation
\[
(F_a^{(N)} (z+N-2) +NF_a^{(N-1)})F_b^{(0)}=
(F_b^{(N)} (z+N-2) +NF_b^{(N-1)})F_a^{(0)},
\]
which provides
\begin{equation} \label{fir}
z=2-N\left(1+\frac{F_a(0)F_b^{(z,\overset{N-1}{\ldots},z)}(0)-F_a^{(z,
\overset{N-1}{\ldots},z)}(0)F_b(0)}{F_a(0)F_b^{(z,\overset{N}{\ldots},z
)}(0)-F_a^{(z,\overset{N}{\ldots},z)}(0)F_b(0)}\right).
\end{equation}

To get the expression (\ref{sec}) we solve the system $c_i=0$, $0\leq i \leq N-3$. This can be done as follows.
Define $x_{0,j}=(N-j)!c_j$ and  $x_{i+1,j}=\frac{x_{i,j}-x_{i,j+1}}{i+1}$. Explicitly we have, for $0\leq i \leq N-3$,
\[
x_{i,0} = \sum_{j=0}^{N-i} \frac{\prod_{k=0}^{N-i-j-1} (N-i-k)(i+k+1)}{(N-i-j)!}F_a^{(j)},
\]
and the linear combination
\begin{align}
\sum_{h=3}^{k} (-1)^{h+k}\frac{(N-h)!}{(N-k)!}&\binom{k}{h} x_{N-h,0} = F_a^{(k)}-(-1)^k\frac{k!}{2}\left(\binom{N-2}{k-2}F_a^{(2)} \right.\notag \\
&\left.+2\binom{N-1}{k-1}(k-2)F_a^{(1)}+\binom Nk (k-1)(k-2)F_a^{(0)}\right), \label{fak}
\end{align}
and similar for $F_b^{(\bullet)}$. In terms of
\begin{align}
G_n&=F_a^{(0)}F_b^{(n)}-F_a^{(n)}F_b^{(0)} \label{G} \\
&=(-1)^k\frac{k!}{2}\left(\binom{N-2}{k-2}G_2+2\binom{N-1}{k-1}(k-2)G_1\right) \notag
\end{align}
one can show that
\[
\frac{G_{N-1}}{G_N}+1=\frac{2}{N}\frac{(2N-3)G_1+G_2}{(2N-4)G_1+G_2}.
\]
\section*{Appendix B}
We provide a condition that is equivalent to the generalised Manin involution $\iota_p$ given by (\ref{Man}) being anti measure-preserving with density
\[
\rho=\frac{L^{N-3}}{F_a},
\]
where $L=0$ is a line through $p$.

It can be verified that the Jacobian determinant of the map $\iota_p$ equals
\[
\text{Jac}(\iota_p)=\frac { \left( 2 \left( N-1 \right) G_1  +G_2 \right) X }{
 \left( 2 \left( N-2 \right) G_1 +G_2 \right) ^{3}},
 \]
with
\[
X=2
 \left(  \left( c-u \right) G^{(u)}_2
 + \left( d-v \right)G^{(v)}_2 \right) G_1 +4\, \left( N-1
 \right) \left( \left( N-2 \right) G_1 +G_2 \right) G_1 -{G_2}^{2}.
\]
On the other hand, by substituting the expressions for $F^{(N)}$ and $F^{(N-1)}$ as given by (\ref{fak}) into (\ref{faz}) with $z$ given by (\ref{fir}) we find
\[
-\frac{\rho(u,v)}{\rho(\iota_p(u,v))}=\frac { \left( 2 \left( N-1 \right) G_1 +G_2 \right)  Y}{
 \left( 2 \left( N-2 \right) G_1 +G_2 \right) ^{3}},
\]
with
\[
Y=
2\, G_1\left( \frac {F_a^{(1)}}{F_a}\,G_2-\frac{F_a^{(2)}}{F_a}\,G_1 \right)-2 \left( N-2 \right) G_1
 \left( \left( N-1 \right) G_1 +G_2 \right) -{G_2}^{2}
.
\]
We have $Y=X$ if
\begin{equation}\label{CfD}
\begin{split}
\left( c-u \right)G^{(u)}_2 &+ \left( d-v \right)G^{(v)}_2 + F_a^{(1)}F_b^{(2)}-F_a^{(2)}F_b^{(1)}\\
&= 2 \left( N-1 \right)  \left(\left( N-2 \right) G_1 +G_2 \right) +
 \left( N-2 \right)  \left(\left( N-1 \right) G_1 +G_2 \right).
\end{split}
\end{equation}
It is easy, using Maple \cite{MAP}, to verify that condition (\ref{CfD}) is satisfied for pencils of degree
$N=2,3,4$.

\section*{Appendix C}
We prove that no new generalised Manin transformations of the form (\ref{tpq}) are obtained from pencils of degree $N>4$.

\begin{theorem}
Higher degree $N>4$ curves with two distinct points of multiplicity $N-2$ are products of the form $C=L^{N-4}Q$, where $L$ is the line through the two points, and $Q$ a quartic.
\end{theorem}

\begin{proof}
Consider a degree $N=5$ curve $C$ with two distinct points of multiplicity 3. Let $L$ be the line through these points. Near each triple point there is a line which intersects the curve in at least three points, see Figures \ref{F3} and \ref{F4}. Note that while we have drawn the generic case where 3 tangents intersect at each triple point, the statement is still true when some of these tangents are imaginary, e.g. when the curve contains a cusp. If $C$ does not contain $L$ there is a line close to $L$ which intersects $C$ in 6 points, which contradicts $N=5$.

\begin{figure}[h]
\begin{center}
\includegraphics[width=7cm]{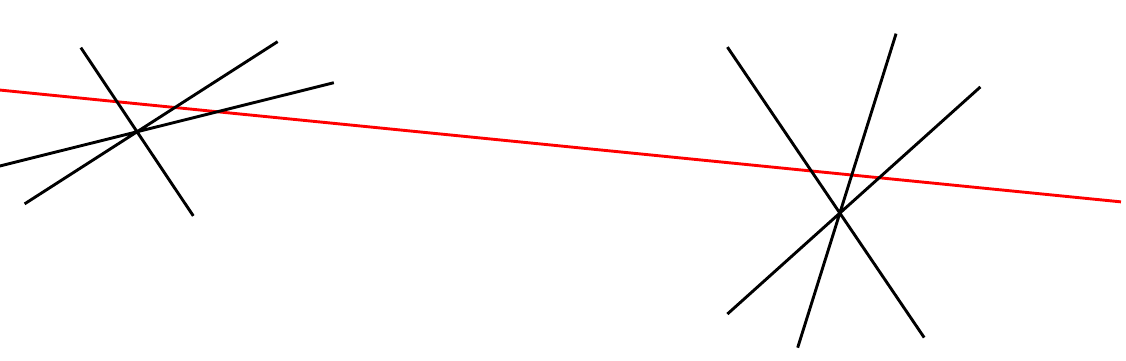}
\end{center}
\caption{\label{F3} A degree 5 curve does not intersect a line in 6 points.}
\end{figure}

\begin{figure}[h]
\begin{center}
\includegraphics[width=7cm]{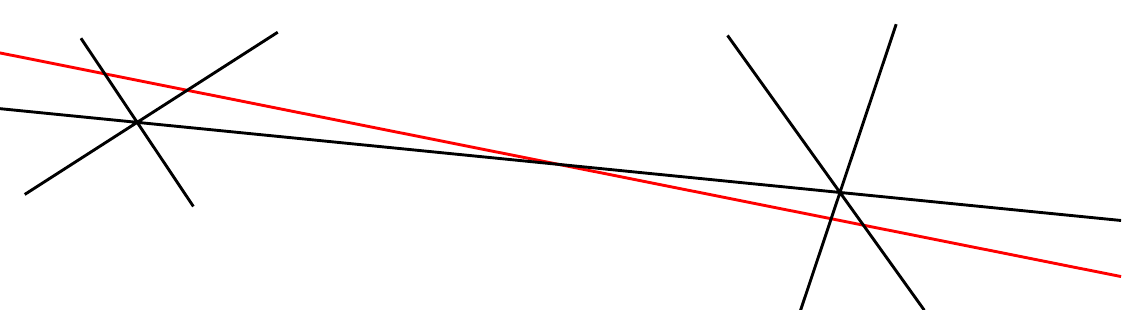}
\end{center}
\caption{\label{F4} Any degree 5 curve with two triple points contains the line through the triple points.}
\end{figure}

Next, let $m$ be the multiplicity of $L$ in a degree $N$ curve $C$.
We need $2(N-2)-m=N$, which implies $m=N-4$.
\end{proof}

This implies that we obtain the same involutions as in the case $N=4$.

\begin{corollary} \label{10}
For higher degree $N>4$ curves with two distinct points of multiplicity $N-2$, the value of $z$, as given by equation (\ref{sec}), does not depend on $N$, for $N\geq4$.
\end{corollary}
\begin{proof}
Consider the degree $N+1$ pencil $\alpha \hat{F}_a(u,v) + \beta \hat{F}_b(u,v)=0$ where $\hat{F}_a(u,v)=F_a(u,v)L(u,v)$, where $F_a$ has degree $N$ and two singular points of multiplicity $N-2$, $(c,d)$ and $(e,f)$, and $L(u,v)=(d-f)(u-e)-(c-e)(v-f)$. We evaluate the functions in (\ref{sec}) at $u+(\tilde{c}-u)z,v+(\tilde{d}-v)z)$, we let $^\prime$ denote differentiation with respect to $z$ and we evaluate at $z=0$. We have
$\hat{F}_a^\prime = F_a^\prime L + L^\prime F_a$ and
$\hat{F}_a^{\prime\prime} = F_a^{\prime\prime} L + 2 L^\prime F_a^\prime$, as $L^{\prime\prime}=0$. Let
\[
K=\dfrac{F_aF_b^{\prime\prime}-F_a^{\prime\prime}F_b}
{F_aF_b^{\prime}-F_a^{\prime}F_b}.
\]
Then
\[
\hat{K}=\frac{\hat{F}_a\hat{F}_b^{\prime\prime}
-\hat{F}_a^{\prime\prime}\hat{F}_b}{\hat{F}_a\hat{F}_b^{\prime}
-\hat{F}_a^{\prime}\hat{F}_b}=K+2\frac{L^\prime}{L},
\]
and
\[
\frac{L^\prime}{L}=-1+\dfrac{\tilde{c}(d-f)+\tilde{d}(e-c)+cf-de}{L}=-1
\]
when $(\tilde{c},\tilde{d})$ equals $(c,d)$ or $(e,f)$. Therefore, from (\ref{sec}),
\[
z_{N+1}=2\left(2(2-N-1)-\hat{K}\right)^{-1}
=2\left(2(2-N)-2-(K-2)\right)^{-1}
=z_N.
\]
\end{proof}

\section*{Appendix D}
Here we give the constants that appear in our formula for quartic polynomials with two double points $(c,d)$ and $(e,f)$, (\ref{quart}):
{\small
\begin{align*}
P&=\left( 4\,{d}^{3}-4\,{f}^{3} \right) a_{{1}}+ \left(3\,
{d}^{3}e-3\,c{f}^{3} \right) a_{{2}}+ \left( 3\,{d}^{3}f-3\,d{f}^{3} \right) a_{{3
}}+ \left(2\,{e}^{2}{d}^{3} -2\,{c}^{2}{f}^{3} \right) a_{{4}} \\
&+
 \left(2\,{d}^{3}ef -2\,cd{f}^{3} \right) a_{{5}}+ \left( 2\,{d}^{3}{
f}^{2}-2\,{d}^{2}{f}^{3} \right) a_{{6}}+ \left( {d}^
{3}{e}^{2}f -{c}^{2}d{f}^{3}\right) a_{{8}}+ \left( {d}^{3}e{f}^{2}-c{d}^{2}{f}^{3}
 \right) a_{{9}}\\[1mm]
Q&=\left(4\,{e}^{3}-4\,{c}^{3} \right) a_{{1}}+ \left(3
\,c{e}^{3}-3\,{c}^{3}e \right) a_{{2}}+ \left( 3\,d{e}^{3}-3\,{c}^{3}f \right) a_
{{3}}+ \left(2\,{c}^{2}{e}^{3}-2\,{c}^{3}{e}^{2} \right) a_{{4}} \\
&+
 \left(2\,cd{e}^{3} -2\,{c}^{3}ef\right) a_{{5}}+ \left(2\,{e}^{3}{d}^{2}-2\,{c}^{3}
{f}^{2} \right) a_{{6}}+ \left({c}
^{2}d{e}^{3} -{c}^{3}{e}^{2}f\right) a_{{8}}+ \left( c{d}^{2}{e}^{3}-{c}^{3}e{f}^{2}
 \right) a_{{9}}
\\[1mm]
R&=\left( 9\,{c}^{3}{d}^{4}{f}^{3}-12\,{c}^{3}{d}^{3}{f}^{4}+3\,{c}^{3}{
f}^{7}+18\,{c}^{2}{d}^{5}e{f}^{2}-24\,{c}^{2}{d}^{4}e{f}^{3}+6\,{c}^{2
}de{f}^{6}-6\,c{d}^{6}{e}^{2}f \right.\\
&\left. +24\,c{d}^{3}{e}^{2}{f}^{4}-18\,c{d}^{2}
{e}^{2}{f}^{5}-3\,{d}^{7}{e}^{3}+12\,{d}^{4}{e}^{3}{f}^{3}-9\,{d}^{3}{
e}^{3}{f}^{4} \right) a_{{1}}+ \left( -2\,{c}^{4}{d}^{3}{f}^{4}+2\,{c}
^{4}{f}^{7}\right.\\
&\left.+5\,{c}^{3}{d}^{4}e{f}^{3}-9\,{c}^{3}{d}^{3}e{f}^{4} +4\,{c}
^{3}de{f}^{6}+12\,{c}^{2}{d}^{5}{e}^{2}{f}^{2}-18\,{c}^{2}{d}^{4}{e}^{
2}{f}^{3}+18\,{c}^{2}{d}^{3}{e}^{2}{f}^{4}\right.\\
&\left. -12\,{c}^{2}{d}^{2}{e}^{2}{f
}^{5}-4\,c{d}^{6}{e}^{3}f+9\,c{d}^{4}{e}^{3}{f}^{3}-5\,c{d}^{3}{e}^{3}
{f}^{4}-2\,{d}^{7}{e}^{4}+2\,{d}^{4}{e}^{4}{f}^{3} \right) a_{{2}}+
 \left( 6\,{c}^{3}{d}^{4}{f}^{4}\right.
\\
&\left.-9\,{c}^{3}{d}^{3}{f}^{5}+3\,{c}^{3}d{
f}^{7}+12\,{c}^{2}{d}^{5}e{f}^{3}-18\,{c}^{2}{d}^{4}e{f}^{4} +6\,{c}^{2
}{d}^{2}e{f}^{6}-6\,c{d}^{6}{e}^{2}{f}^{2}+18\,c{d}^{4}{e}^{2}{f}^{4}\right.
\\
&\left.-
12\,c{d}^{3}{e}^{2}{f}^{5}-3\,{d}^{7}{e}^{3}f+9\,{d}^{5}{e}^{3}{f}^{3}
-6\,{d}^{4}{e}^{3}{f}^{4} \right) a_{{3}} +  \left( {c}^{5}{f}^{7}-4\,{c}^{4}{d}^{3}e{f}^{4}+2\,{c}^{4}de{f}^{6}\right.\\
&\left.+{c}^{3}{d}^{4}{e}^{2}{f}^{3}+6\,{c}^{3}{d}^{3}{e}^{2}{f}^{4}
-6\,{c}^{3}{d}^{2}{e}^{2}{f}^{5}+6\,{c}
^{2}{d}^{5}{e}^{3}{f}^{2}-6\,{c}^{2}{d}^{4}{e}^{3}{f}^{3}-{c}^{2}{d}^{
3}{e}^{3}{f}^{4}-2\,c{d}^{6}{e}^{4}f\right.\\
&\left. +4\,c{d}^{4}{e}^{4}{f}^{3}-{d}^{7}
{e}^{5} \right) a_{{4}}+ \left( -2\,{c}^{4}{d}^{3}{f}^{5}+2\,{c}^{4}d{
f}^{7}+2\,{c}^{3}{d}^{4}e{f}^{4}-6\,{c}^{3}{d}^{3}e{f}^{5}+4\,{c}^{3}{
d}^{2}e{f}^{6}\right.\\
&\left.+6\,{c}^{2}{d}^{5}{e}^{2}{f}^{3}-6\,{c}^{2}{d}^{3}{e}^{2
}{f}^{5}-4\,c{d}^{6}{e}^{3}{f}^{2}+6\,c{d}^{5}{e}^{3}{f}^{3}-2\,c{d}^{
4}{e}^{3}{f}^{4}-2\,{d}^{7}{e}^{4}f +2\,{d}^{5}{e}^{4}{f}^{3} \right) a_{{5}} \\
&+ \left( 3\,{c}^{3}{d}^{4}{f}^{5}-6\,{c}^{3}{d}^{3}{f}^{6}+3\,{c
}^{3}{d}^{2}{f}^{7}+6\,{c}^{2}{d}^{5}e{f}^{4}-12\,{c}^{2}{d}^{4}e{f}^{
5}+6\,{c}^{2}{d}^{3}e{f}^{6}-6\,c{d}^{6}{e}^{2}{f}^{3}\right.
\\
&\left. +12\,c{d}^{5}{e}
^{2}{f}^{4}-6\,c{d}^{4}{e}^{2}{f}^{5}-3\,{d}^{7}{e}^{3}{f}^{2}+6\,{d}^
{6}{e}^{3}{f}^{3}-3\,{d}^{5}{e}^{3}{f}^{4} \right) a_{{6}}+ \left( {c}
^{5}d{f}^{7}-4\,{c}^{4}{d}^{3}e{f}^{5}\right.\\
&\left. +2\,{c}^{4}{d}^{2}e{f}^{6}+4\,{c
}^{3}{d}^{4}{e}^{2}{f}^{4}-3\,{c}^{3}{d}^{3}{e}^{2}{f}^{5}+3\,{c}^{2}{
d}^{5}{e}^{3}{f}^{3}-4\,{c}^{2}{d}^{4}{e}^{3}{f}^{4}-2\,c{d}^{6}{e}^{4
}{f}^{2}+4\,c{d}^{5}{e}^{4}{f}^{3}\right.\\
&\left. -{d}^{7}{e}^{5}f \right) a_{{8}}+
 \left( -2\,{c}^{4}{d}^{3}{f}^{6}+2\,{c}^{4}{d}^{2}{f}^{7}-{c}^{3}{d}^
{4}e{f}^{5}+{c}^{3}{d}^{3}e{f}^{6}+6\,{c}^{2}{d}^{5}{e}^{2}{f}^{4}-6\,
{c}^{2}{d}^{4}{e}^{2}{f}^{5}\right.\\
&\left. -c{d}^{6}{e}^{3}{f}^{3}+c{d}^{5}{e}^{3}{f}
^{4}-2\,{d}^{7}{e}^{4}{f}^{2}+2\,{d}^{6}{e}^{4}{f}^{3} \right) a_{{9}}
+ \left( {c}^{5}{d}^{2}{f}^{7}-2\,{c}^{4}{d}^{3}e{f}^{6}+{c}^{3}{d}^{4
}{e}^{2}{f}^{5}\right.\\
&\left. -{c}^{2}{d}^{5}{e}^{3}{f}^{4}+2\,c{d}^{6}{e}^{4}{f}^{3}
-{d}^{7}{e}^{5}{f}^{2} \right) a_{{13}}
\end{align*}

\begin{align*}
S&=\left( 12\,{c}^{4}{d}^{2}{f}^{4}-12\,{c}^{4}{d}^{3}{f}^{3}-24\,{c}^{
3}{d}^{4}e{f}^{2}+24\,{c}^{3}{d}^{3}e{f}^{3}+36\,{c}^{2}{d}^{4}{e}^{2}
{f}^{2}-36\,{c}^{2}{d}^{2}{e}^{2}{f}^{4} \right. \\
&\left.-24\,c{d}^{3}{e}^{3}{f}^{3}+24
\,c{d}^{2}{e}^{3}{f}^{4}-12\,{d}^{4}{e}^{4}{f}^{2}+12\,{d}^{3}{e}^{4}{
f}^{3} \right) a_{{1}}+ \left( 3\,{c}^{5}{d}^{2}{f}^{4}-6\,{c}^{4}{d}^
{3}e{f}^{3}\right. \\
&\left.+9\,{c}^{4}e{f}^{4}{d}^{2}-15\,{c}^{3}{d}^{4}{e}^{2}{f}^{2}
+18\,{c}^{3}{d}^{3}{e}^{2}{f}^{3}-27\,{c}^{3}{d}^{2}{e}^{2}{f}^{4}+27
\,{c}^{2}{d}^{4}{e}^{3}{f}^{2}-18\,{c}^{2}{d}^{3}{e}^{3}{f}^{3}\right. \\
&\left.+15\,{c
}^{2}{d}^{2}{e}^{3}{f}^{4}-9\,c{e}^{4}{f}^{2}{d}^{4}+6\,c{d}^{3}{e}^{4
}{f}^{3}-3\,{d}^{4}{e}^{5}{f}^{2} \right) a_{{2}}+ \left( -8\,{c}^{4}{
d}^{3}{f}^{4}+9\,{c}^{4}{d}^{2}{f}^{5}-{c}^{4}{f}^{7}\right. \\
&\left.-16\,{c}^{3}{d}^{
4}e{f}^{3}+18\,{c}^{3}{d}^{3}e{f}^{4}-2\,{c}^{3}de{f}^{6}+3\,{c}^{2}{d
}^{5}{e}^{2}{f}^{2}+27\,{c}^{2}{d}^{4}{e}^{2}{f}^{3}-27\,{c}^{2}{d}^{3
}{e}^{2}{f}^{4}\right. \\
&\left.-3\,{c}^{2}{d}^{2}{e}^{2}{f}^{5}+2\,c{d}^{6}{e}^{3}f-18
\,c{d}^{4}{e}^{3}{f}^{3}+16\,c{d}^{3}{e}^{3}{f}^{4}+{d}^{7}{e}^{4}-9\,
{d}^{5}{e}^{4}{f}^{2}+8\,{d}^{4}{e}^{4}{f}^{3} \right) a_{{3}}\\
&+
 \left( 6\,{c}^{5}{d}^{2}e{f}^{4}-12\,{c}^{4}{d}^{2}{e}^{2}{f}^{4}-6\,
{c}^{3}{d}^{4}{e}^{3}{f}^{2}+6\,{c}^{3}{d}^{2}{e}^{3}{f}^{4}+12\,{c}^{
2}{d}^{4}{e}^{4}{f}^{2}-6\,c{d}^{4}{e}^{5}{f}^{2} \right) a_{{4}}\\
&+
 \left( 3\,{c}^{5}{d}^{2}{f}^{5}-{c}^{5}{f}^{7}-2\,{c}^{4}{d}^{3}e{f}^
{4}+6\,{c}^{4}{d}^{2}e{f}^{5}-2\,{c}^{4}de{f}^{6}-7\,{c}^{3}{d}^{4}{e}
^{2}{f}^{3}-6\,{c}^{3}{d}^{3}{e}^{2}{f}^{4}\right. \\
&\left.-3\,{c}^{3}{d}^{2}{e}^{2}{f
}^{5}+3\,{c}^{2}{d}^{5}{e}^{3}{f}^{2}+6\,{c}^{2}{d}^{4}{e}^{3}{f}^{3}+
7\,{c}^{2}{d}^{3}{e}^{3}{f}^{4}+2\,c{d}^{6}{e}^{4}f-6\,c{d}^{5}{e}^{4}
{f}^{2}+{d}^{7}{e}^{5}\right.\\
&\left.+2\,c{d}^{4}{e}^{4}{f}^{3}-3\,{d}^{5}{e}^{5}{f}^
{2} \right) a_{{5}}+ \left( -4\,{c}^{4}{d}^{3}{f}^{5}+6\,{c}^{4}{d}^{2
}{f}^{6}-2\,{c}^{4}d{f}^{7}-8\,{c}^{3}{d}^{4}e{f}^{4}+12\,{c}^{3}{d}^{
3}e{f}^{5}\right. \\
&\left.-4\,{c}^{3}{d}^{2}e{f}^{6}+6\,{c}^{2}{d}^{5}{e}^{2}{f}^{3}-6
\,{c}^{2}{d}^{3}{e}^{2}{f}^{5}+4\,c{d}^{6}{e}^{3}{f}^{2}-12\,c{d}^{5}{
e}^{3}{f}^{3}+8\,c{d}^{4}{e}^{3}{f}^{4}+2\,{d}^{7}{e}^{4}f\right. \\
& \left.-6\,{d}^{6}{
e}^{4}{f}^{2}+4\,{d}^{5}{e}^{4}{f}^{3} \right) a_{{6}}+ \left( -{c}^{6
}{f}^{7}+6\,{c}^{5}{d}^{2}e{f}^{5}-2\,{c}^{5}de{f}^{6}-5\,{c}^{4}{d}^{
3}{e}^{2}{f}^{4}-4\,{c}^{3}{d}^{4}{e}^{3}{f}^{3}\right. \\
&\left.+4\,{c}^{3}{d}^{3}{e}^
{3}{f}^{4}+5\,{c}^{2}{d}^{4}{e}^{4}{f}^{3}+2\,c{d}^{6}{e}^{5}f-6\,c{d}
^{5}{e}^{5}{f}^{2}+{d}^{7}{e}^{6} \right) a_{{8}}+ \left( 3\,{c}^{5}{d
}^{2}{f}^{6}-2\,{c}^{5}d{f}^{7}\right. \\
&\left.+2\,{c}^{4}{d}^{3}e{f}^{5}-{c}^{4}{d}^{
2}e{f}^{6}-8\,{c}^{3}{d}^{4}{e}^{2}{f}^{4}+8\,{c}^{2}{d}^{4}{e}^{3}{f}
^{4}+c{d}^{6}{e}^{4}{f}^{2}-2\,c{d}^{5}{e}^{4}{f}^{3}+2\,{d}^{7}{e}^{5
}f\right. \\
&\left.-3\,{d}^{6}{e}^{5}{f}^{2} \right) a_{{9}}+ \left( -2\,{c}^{6}d{f}^{7
}+2\,{c}^{5}{d}^{2}e{f}^{6}+2\,{c}^{4}{d}^{3}{e}^{2}{f}^{5}-2\,{c}^{2}
{d}^{5}{e}^{4}{f}^{3}-2\,c{d}^{6}{e}^{5}{f}^{2}\right. \\
&\left.+2\,{d}^{7}{e}^{6}f
 \right) a_{{13}}
\\[1mm]
T&=\left( 24\,{c}^{4}{d}^{3}{e}^{2}f-36\,{c}^{4}{d}^{2}{e}^{2}{f}^{2}+12
\,{c}^{4}{e}^{2}{f}^{4}+12\,{c}^{3}{d}^{4}{e}^{3}-24\,{c}^{3}{d}^{3}{e
}^{3}f+24\,{c}^{3}d{e}^{3}{f}^{3}\right. \\
&\left.-12\,{c}^{3}{e}^{3}{f}^{4}-12\,{c}^{2
}{d}^{4}{e}^{4}+36\,{c}^{2}{d}^{2}{e}^{4}{f}^{2}-24\,{c}^{2}d{e}^{4}{f
}^{3} \right) a_{{1}}+ \left( 9\,{c}^{5}{e}^{2}{f}^{4}-{c}^{7}{f}^{4}-2\,{c}^{6}de{f}^{3}\right. \\
&  \left.-3\,{
c}^{5}{d}^{2}{e}^{2}{f}^{2}+16\,{c}^{4}{d}^{3
}{e}^{3}f-27\,{c}^{4}{d}^{2}{e}^{3}{f}^{2}+18\,{c}^{4}d{e}^{3}{f}^{3}-
8\,{c}^{4}{e}^{3}{f}^{4}+8\,{c}^{3}{d}^{4}{e}^{4}+{d}^{4}{e}^{7}\right. \\
&  \left.
-18\,{c}^{3}{d}^{3}{e
}^{4}f+27\,{c}^{3}{d}^{2}{e}^{4}{f}^{2}-16\,{c}^{3}d{e}^{4}{f}^{3}-9\,
{c}^{2}{d}^{4}{e}^{5}+3\,{c}^{2}{d}^{2}{e}^{5}{f}^{2}+2\,c{d}^{3}{e}^{
6}f \right) a_{{2}} \\
&+ \left( 15\,{c}^{4}{d}^{3}{e}^{2}{f
}^{2}-27\,{c}^{4}{d}^{2}{e}^{2}{f}^{3}+9\,{c}^{4}d{e}^{2}{f}^{4}+3\,{c
}^{4}{e}^{2}{f}^{5}+6\,{c}^{3}{d}^{4}{e}^{3}f-18\,{c}^{3}{d}^{3}{e}^{3
}{f}^{2}\right. \\
&  \left.+18\,{c}^{3}{d}^{2}{e}^{3}{f}^{3}-6\,{c}^{3}d{e}^{3}{f}^{4}-3
\,{c}^{2}{d}^{5}{e}^{4}-9\,{c}^{2}{d}^{4}{e}^{4}f+27\,{c}^{2}{d}^{3}{e
}^{4}{f}^{2}-15\,{c}^{2}{d}^{2}{e}^{4}{f}^{3} \right) a_{{3}}\\
&+ \left(
-2\,{c}^{7}e{f}^{4}-4\,{c}^{6}d{e}^{2}{f}^{3}+6\,{c}^{6}{e}^{2}{f}^{4}
-6\,{c}^{5}{d}^{2}{e}^{3}{f}^{2}+12\,{c}^{5}d{e}^{3}{f}^{3}-4\,{c}^{5}
{e}^{3}{f}^{4}+8\,{c}^{4}{d}^{3}{e}^{4}f\right. \\
&  \left.-8\,{c}^{4}d{e}^{4}{f}^{3}+4\,
{c}^{3}{d}^{4}{e}^{5}-12\,{c}^{3}{d}^{3}{e}^{5}f+6\,{c}^{3}{d}^{2}{e}^
{5}{f}^{2}-6\,{c}^{2}{d}^{4}{e}^{6}+4\,{c}^{2}{d}^{3}{e}^{6}f+2\,c{d}^
{4}{e}^{7} \right) a_{{4}}\\
&+ \left( -{c}^{7}{f}^{5}-2\,{c}^{6}de{f}^{4}
-3\,{c}^{5}{d}^{2}{e}^{2}{f}^{3}+6\,{c}^{5}d{e}^{2}{f}^{4}+3\,{c}^{5}{
e}^{2}{f}^{5}+7\,{c}^{4}{d}^{3}{e}^{3}{f}^{2}-6\,{c}^{4}{d}^{2}{e}^{3}
{f}^{3}\right. \\
&  \left.-2\,{c}^{4}d{e}^{3}{f}^{4}+2\,{c}^{3}{d}^{4}{e}^{4}f+6\,{c}^{3}
{d}^{3}{e}^{4}{f}^{2}-7\,{c}^{3}{d}^{2}{e}^{4}{f}^{3}-3\,{c}^{2}{d}^{5
}{e}^{5}-6\,{c}^{2}{d}^{4}{e}^{5}f+3\,{c}^{2}{d}^{3}{e}^{5}{f}^{2}\right.\\
&  \left.+2\,
c{d}^{4}{e}^{6}f+{d}^{5}{e}^{7} \right) a_{{5}}+ \left( 6\,{c}^{4}{d}^
{3}{e}^{2}{f}^{3}-12\,{c}^{4}{d}^{2}{e}^{2}{f}^{4}+6\,{c}^{4}d{e}^{2}{
f}^{5}-6\,{c}^{2}{d}^{5}{e}^{4}f-6\,{
c}^{2}{d}^{3}{e}^{4}{f}^{3}\right.
\\
&  \left.+12\,{c}^{2}{d}^{4}{e}^{4}{f}^{2} \right) a_{{6}}+ \left( -2\,{c}^{7}e{f}^{5
}-{c}^{6}d{e}^{2}{f}^{4}+3\,{c}^{6}{e}^{2}{f}^{5}+2\,{c}^{5}d{e}^{3}{f
}^{4}+8\,{c}^{4}{d}^{3}{e}^{4}{f}^{2}+2\,c{d}^{5}{e}^{7}\right. \\
&  \left.-8\,{c}^{4}{d}^{2}{e}^{4}{f}^{3}
-2\,{c}^{3}{d}^{4}{e}^{5}f-3\,{c}^{2}{d}^{5}{e}^{6}+{c}^{2}{d}^{4}{e}^{
6}f \right) a_{{8}}+ \left( -{c}^{7}{f}^{6}-2\,{c}^
{6}de{f}^{5}+6\,{c}^{5}d{e}^{2}{f}^{5}\right. \\
&  \left.+4\,{c}^{4}{d}^{3}{e}^{3}{f}^{3}
-5\,{c}^{4}{d}^{2}{e}^{3}{f}^{4}+5\,{c}^{3}{d}^{4}{e}^{4}{f}^{2}-4\,{c
}^{3}{d}^{3}{e}^{4}{f}^{3}-6\,{c}^{2}{d}^{5}{e}^{5}f +2\,c{d}^{5}{e}^{6
}f+{d}^{6}{e}^{7} \right) a_{{9}} \\
&+ \left( -2\,{c}^{7}e{f}^{6}+2\,{c}^{
6}d{e}^{2}{f}^{5}+2\,{c}^{5}{d}^{2}{e}^{3}{f}^{4}-2\,{c}^{3}{d}^{4}{e}
^{5}{f}^{2}-2\,{c}^{2}{d}^{5}{e}^{6}f+2\,c{d}^{6}{e}^{7} \right) a_{{
13}}
\\[1mm]
U&=\left( 3\,{c}^{7}{f}^{3}+6\,{c}^{6}de{f}^{2}-18\,{c}^{5}{d}^{2}{e}^{2
}f-9\,{c}^{4}{d}^{3}{e}^{3}+24\,{c}^{4}{d}^{2}{e}^{3}f-12\,{c}^{4}{e}^
{3}{f}^{3}+12\,{c}^{3}{d}^{3}{e}^{4}\right. \\
&  \left.-24\,{c}^{3}d{e}^{4}{f}^{2}+9\,{c}
^{3}{e}^{4}{f}^{3}+18\,{c}^{2}d{e}^{5}{f}^{2}-6\,c{d}^{2}{e}^{6}f-3\,{
d}^{3}{e}^{7} \right) a_{{1}}+ \left( 3\,{c}^{7}e{f}^{3}+6\,{c}^{6}d{e
}^{2}{f}^{2}\right. \\
&  \left.
-12\,{c}^{5}{d}^{2}{e}^{3}f-9\,{c}^{5}{e}^{3}{f}^{3}-6\,{c
}^{4}{d}^{3}{e}^{4}+18\,{c}^{4}{d}^{2}{e}^{4}f-18\,{c}^{4}d{e}^{4}{f}^
{2}+6\,{c}^{4}{e}^{4}{f}^{3}+9\,{c}^{3}{d}^{3}{e}^{5}\right. \\
&  \left.+12\,{c}^{3}d{e}^
{5}{f}^{2}-6\,{c}^{2}{d}^{2}{e}^{6}f-3\,c{d}^{3}{e}^{7} \right) a_{{2}
}+ \left( 2\,{c}^{7}{f}^{4}+4\,{c}^{6}de{f}^{3}-12\,{c}^{5}{d}^{2}{e}^
{2}{f}^{2}-5\,{c}^{4}{d}^{3}{e}^{3}f\right. \\
&  \left.+18\,{c}^{4}{d}^{2}{e}^{3}{f}^{2}-
9\,{c}^{4}d{e}^{3}{f}^{3}-2\,{c}^{4}{e}^{3}{f}^{4}+2\,{c}^{3}{d}^{4}{e
}^{4}+9\,{c}^{3}{d}^{3}{e}^{4}f-18\,{c}^{3}{d}^{2}{e}^{4}{f}^{2}+5\,{c
}^{3}d{e}^{4}{f}^{3}\right. \\
&  \left.+12\,{c}^{2}{d}^{2}{e}^{5}{f}^{2}-4\,c{d}^{3}{e}^{
6}f-2\,{d}^{4}{e}^{7} \right) a_{{3}}+ \left( 3\,{c}^{7}{e}^{2}{f}^{3}
+6\,{c}^{6}d{e}^{3}{f}^{2}-6\,{c}^{6}{e}^{3}{f}^{3}-6\,{c}^{5}{d}^{2}{
e}^{4}f\right.\\
&  \left.-12\,{c}^{5}d{e}^{4}{f}^{2}+3\,{c}^{5}{e}^{4}{f}^{3}-3\,{c}^{4}
{d}^{3}{e}^{5}+12\,{c}^{4}{d}^{2}{e}^{5}f+6\,{c}^{4}d{e}^{5}{f}^{2}+6
\,{c}^{3}{d}^{3}{e}^{6}-6\,{c}^{3}{d}^{2}{e}^{6}f \right. \\
&  \left.-3\,{c}^{2}{d}^{3}{e}^{7} \right) a_{{4}}
+ \left( 2\,{c}^{7}e{f}^{4}+4\,{c}^{6}d{e}^{2}{f}^
{3}-6\,{c}^{5}{d}^{2}{e}^{3}{f}^{2}-6\,{c}^{5}d{e}^{3}{f}^{3}-2\,{c}^{
5}{e}^{3}{f}^{4}-2\,{c}^{4}{d}^{3}{e}^{4}f\right.
\end{align*}

\begin{align*}
& \left.+2\,{c}^{4}d{e}^{4}{f}^{3}+2
\,{c}^{3}{d}^{4}{e}^{5}+6\,{c}^{3}{d}^{3}{e}^{5}f+6\,{c}^{3}{d}^{2}{e}
^{5}{f}^{2}-4\,{c}^{2}{d}^{3}{e}^{6}f-2\,c{d}^{4}{e}^{7} \right) a_{{5
}}+ \left( {c}^{7}{f}^{5}\right. \\
&  \left.
+2\,{c}^{6}de{f}^{4}-6\,{c}^{5}{d}^{2}{e}^{2}
{f}^{3}-{c}^{4}{d}^{3}{e}^{3}{f}^{2}+6\,{c}^{4}{d}^{2}{e}^{3}{f}^{3}-4
\,{c}^{4}d{e}^{3}{f}^{4}+4\,{c}^{3}{d}^{4}{e}^{4}f-6\,{c}^{3}{d}^{3}{e
}^{4}{f}^{2}\right. \\
&  \left.
+{c}^{3}{d}^{2}{e}^{4}{f}^{3}+6\,{c}^{2}{d}^{3}{e}^{5}{f}^
{2}-2\,c{d}^{4}{e}^{6}f-{d}^{5}{e}^{7} \right) a_{{6}} + \left( 2\,{c}^
{7}{e}^{2}{f}^{4}+{c}^{6}d{e}^{3}{f}^{3}-2\,{c}^{6}{e}^{3}{f}^{4}\right. \\
&  \left.-6\,{
c}^{5}{d}^{2}{e}^{4}{f}^{2}-{c}^{5}d{e}^{4}{f}^{3}+{c}^{4}{d}^{3}{e}^{
5}f+6\,{c}^{4}{d}^{2}{e}^{5}{f}^{2}+2\,{c}^{3}{d}^{4}{e}^{6}-{c}^{3}{d
}^{3}{e}^{6}f -2\,{c}^{2}{d}^{4}{e}^{7} \right) a_{{8}} \\
&  + \left( {c}^{7}
e{f}^{5}+2\,{c}^{6}d{e}^{2}{f}^{4}-3\,{c}^{5}{d}^{2}{e}^{3}{f}^{3}-4\,
{c}^{5}d{e}^{3}{f}^{4}-4\,{c}^{4}{d}^{3}{e}^{4}{f}^{2}+4\,{c}^{4}{d}^{
2}{e}^{4}{f}^{3}+4\,{c}^{3}{d}^{4}{e}^{5}f \right. \\
&  \left.+3\,{c}^{3}{d}^{3}{e}^{5}{f}
^{2}-2\,{c}^{2}{d}^{4}{e}^{6}f-c{d}^{5}{e}^{7} \right) a_{{9}}+
 \left( {c}^{7}{e}^{2}{f}^{5}-2\,{c}^{6}d{e}^{3}{f}^{4}+{c}^{5}{d}^{2}
{e}^{4}{f}^{3}-{c}^{4}{d}^{3}{e}^{5}{f}^{2} \right. \\
&  \left.+2\,{c}^{3}{d}^{4}{e}^{6}f -{c}^{2}{d}^{5}{e}^{7} \right.) a_{{13}}
\end{align*}
}

\section*{Appendix E}
We provide a class of pencils which admit a fractional linear symmetry switch and show that each curve of such a pencil is a product of lines.

The projective collineation (\ref{flsw}) is an involution for solutions of
\begin{equation} \label{bils}
\begin{split}
ab+be+ch=0,\quad ac+bf+ci=0,\quad ad+de+fg=0,\quad ag+dh+gi=0\\
bg+eh+hi=0,\quad cd+ef+fi=0,\quad a^2+bd=fh+i^2,\quad e^2+bd=cg+i^2.
\end{split}
\end{equation}
Assuming that $b\neq0$, the highest dimensional family of solutions\footnote{Lower dimensional solutions can be obtained by taking $b=0$ and either $c=0$ or $h=0$.} to (\ref{bils}) can be parameterised in terms of $b,c,e,h,i$
by
\begin{equation} \label{hdf}
a=-e-{\frac {ch}{b}},\quad g=-{\frac {h \left( e+i \right) }{b}},\quad f={\frac {c
 \left( be-bi+ch \right) }{{b}^{2}}},\quad  d=-{\frac { \left( e+i \right)
 \left( be-bi+ch \right) }{{b}^{2}}}.
\end{equation}
We reparametrise\footnote{This reparametrisation is invertible when $bh(bh+ch-bi)(be+ch-bi)\neq0$. In particular, this means that the linear switch from the previous section (where $h=0$) is not included.} the solution (\ref{hdf}) in terms of parameters $\alpha,\beta,\gamma,P,Q$
\begin{align*}
b&=hP,& c&={\frac {hP \left( \beta PQ - (\alpha +\gamma) Q-2\,\alpha \right)}{\alpha+\gamma}},\\
e&={\frac {h \left( \alpha Q+\gamma Q+\alpha \right)}{\alpha+\gamma}},&
i&={\frac { h\left(\beta PQ-(\alpha+\gamma) Q-\alpha \right) }{\alpha+\gamma}}.
\end{align*}
The parameters $P,Q$ play a special role; defining $Y=(P,Q)$ the projective collineation takes the form
\begin{equation} \label{nsw}
\sigma:U\rightarrow U+z(U-Y),\qquad z=-1+\frac{\alpha}{(\alpha+\gamma)v-\beta Qu+\delta},
\end{equation}
with
\begin{equation} \label{D}
\delta=(\beta P -\gamma)Q - \alpha(Q+1).
\end{equation}
The constraint (\ref{D}) ensures that $\sigma$ (\ref{nsw}) is an involution. The form of (\ref{nsw}) shows that $\sigma$ preserves lines throught $Y$. We take
\[
P=\frac{BF-CE}{AE-BD},\qquad Q=\frac{CD-AF}{AE-BD},
\]
so that $Y=(P,Q)$ is the intersection point
of the lines $S=0$ and $T=0$, where
\[
{S}=Au+Bv+C,\qquad {T}=Du+Ev+F.
\]
Having fixed $Y$, the four parameter family of projective collineations (\ref{nsw})
leaves the ratio ${S}/{T}$ invariant, and a three parameter subfamily, defined by (\ref{D}), consists of involutions.

Using ${S}$, ${T}$ we can build pencils of fixed degree which are invariant under $\sigma$ (\ref{nsw}). For $N=2$ we have $P_{\alpha,\beta}(u,v)=0$ where
\begin{equation} \label{qp}
F_a=a_1 {S}^2+ a_2 {S}{T}+ a_3 {T}^2,\qquad
F_b=b_1 {S}^2+ b_2 {S}{T}+ b_3 {T}^2.
\end{equation}
The point $Y$ is a double point of the pencil $P_{\alpha,\beta}(u,v)=0$. Because the degree of the pencil is two, all curves are singular, i.e. each curve factorises into a product of lines. If $\alpha,\beta$ are such that $P_{\alpha,\beta}(\hat{u},\hat{v})=0$, then $ P_{\alpha,\beta}(u,v)=LK$. If $L=0$ is the line through $Y$ and $\hat{U}=(\hat{u},\hat{v})$, then $K=0$ is the line through $Y$ with direction
\[
\begin{pmatrix}
a_1b_2-a_2b_1 & a_2b_3-a_3b_2 & a_1b_3-a_3b_1
\end{pmatrix}
\begin{pmatrix}
-B\hat{S} & A\hat{S} \\
-E\hat{T} & D\hat{T} \\
-(B\hat{T}+E\hat{S})& A\hat{T}+ D\hat{S}
\end{pmatrix}.
\]
Choosing involution points $p$ and $\sigma(p)$, for some $\alpha,\beta,\gamma$, and $\delta$ given by (\ref{D}), the map $\iota_{\sigma(p)}\circ \iota_p$ admits a root.

\begin{theorem}
The root $\rho_p=\sigma \circ \iota_p$, where $\sigma$ is given by (\ref{nsw}), with (\ref{D}), and $\iota_p$ by (\ref{evol}), is an integrable map of the plane. It preserves each curve of the quadratic pencil $P_{\alpha,\beta}(u,v)=0$ with (\ref{pen}) and (\ref{qp}), and it is measure-preserving with density  $(L F_a)^{-1}$ where, with $p=(c,d)$,
\begin{equation} \label{LL}
L=(d-Q)(u-P)-(c-P)(v-Q),
\end{equation}
so that $L=0$ is the line through $p$ and $Y$.
\end{theorem}

For $N=3$ we have that
\[
P_{\alpha,\beta}(u,v)=\alpha(a_1 {S}^3+ a_2 {S}^2{T}+ a_3 {S}{T}^2 + a_4{T}^3)+\beta(b_1 {S}^3+ b_2 {S}^2{T}+ b_3 {S}{T}^2+b_4{T}^3)
\]
admits the symmetry switch (\ref{nsw}). We
require that the involution point $p=(c,d)$ is a base point of the pencil. Because $Y$ is a triple point, each curve is a product of three lines, with common intersection point $Y$. Thus the line $L$ through $p$ and $Y$ (\ref{LL}) is contained in each curve, we have $P_{\alpha,\beta}(u,v)=LZ$, where $Z$ is a quadratic polynomial with a double point at $Y$. No new maps which admit a root are obtained, other than the ones already obtained in the $N=2$ case. Similarly no other maps are obtained
in the $N>3$ case where the requirement of the involution point $p$ being a singular point of multiplicity $N-2$ leads to the factorisation $P_{\alpha,\beta}(u,v)=L^{N-2}Z$.

\begin{example} \label{20}
We take $S=u+12v+2$ and $T=2u-4v-3$, so the lines $S=0$ and $T=0$ intersect in $Y=(1,-1/4)$. Taking $N=2$, $a_i=i+1$, $b_i=4-i$, the point $s=(1,3/7)$ lies on the curve $P_{34,-31}(u,v)=35(u-1)(9u-88v-31)=0$. Choosing $\alpha=-6,\beta=20,\gamma=6$ gives $d=1$ and
\[
\sigma:(u,v)\rightarrow \left(\frac{7-u}{5u+1},-\frac14\frac{5u+24v+7}{5u+1}\right).
\]
One verifies that
\[
P_{34,-31}(\sigma(u,v))=\frac{1260(u-1)(9u-88v-31)}{(5u+1)^2}.
\]
We choose the point $p=(2,1)$ as involution point, and we find $r=\iota_p(s)=-(129,115)/289$. The points
\[
\sigma(p)=(5/11,-41/44),\qquad
\sigma(s)=(1,-13/14),\qquad
\sigma(r)=(-538/89,-691/712)
\]
lie on a straight line, see Figure \ref{FFL}.
It can also be checked that
\[
\sigma(\iota_p(\sigma(r)))=\sigma(1,4325/5728)=(1,-7189/5728)=\iota_{\sigma(p)}(r).
\]

\begin{figure}[htb]
\begin{center}
\includegraphics[width=8cm]{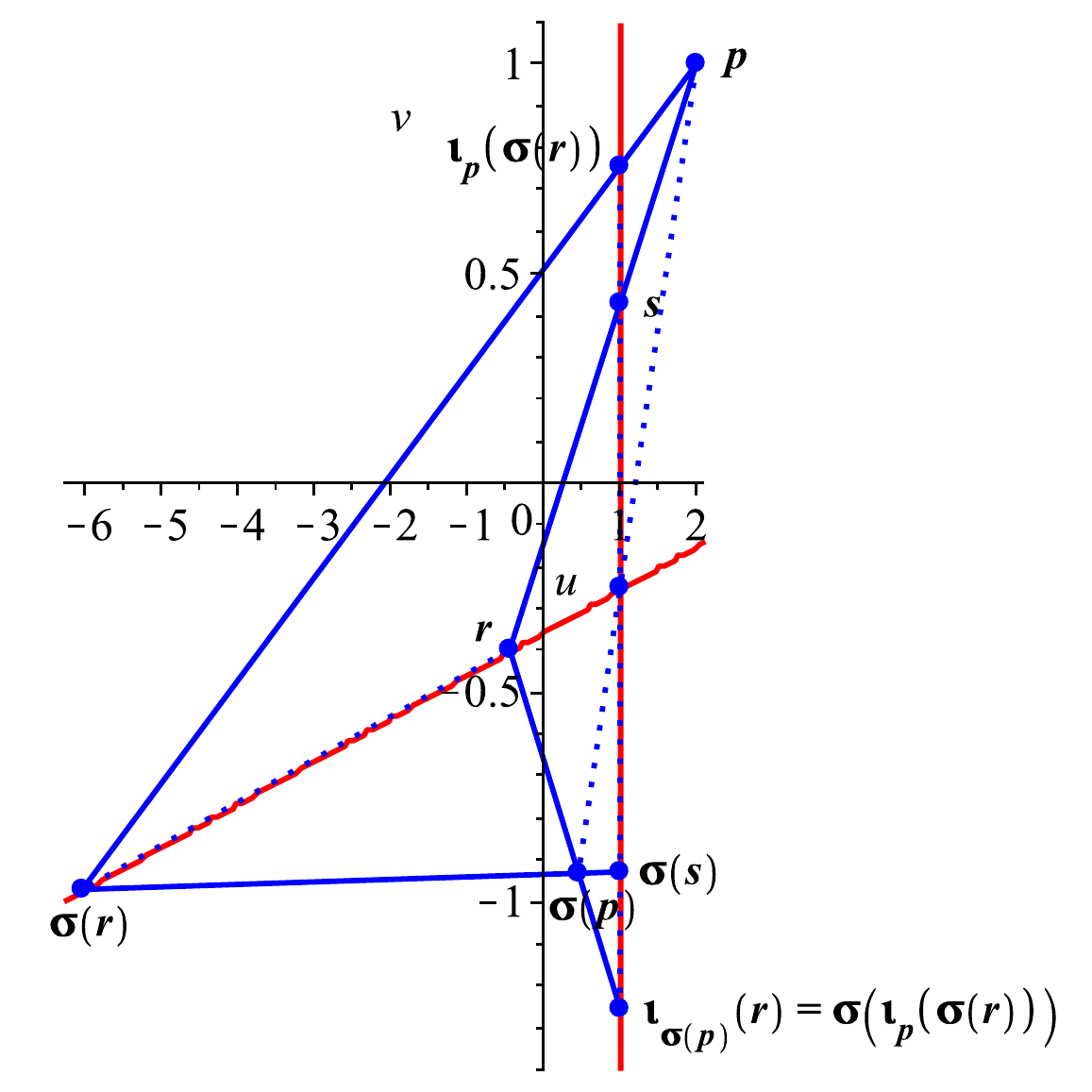}
\end{center}
\caption{\label{FFL} A product of lines admitting fractional linear symmetries, cf. Example \ref{20}.}
\end{figure}

\end{example}


\end{document}